\newcommand{\ver}{arxiv}
\newcommand{\arxapp}[2]{\ifthenelse{\equal{\ver}{conf}}{#2}{#1}}
\newcommand{\planarOrPlane}{plane}
\newcommand{\improvement}[2]{\ifthenelse{\equal{\planarOrPlane}{plane}}{#2}{#1}}
\title{Orthogonal and Smooth Orthogonal Layouts of 1-Planar Graphs
  with Low Edge Complexity\thanks{This work started at
    Dagstuhl seminar 16452 ``Beyond-Planar Graphs: Algorithmics and
    Combinatorics''.  We thank the organizers and the other participants.}}
\titlerunning{Orthogonal and Smooth Orthogonal Layouts of 1-Planar Graphs}
\author{Evmorfia~Argyriou\inst{1} \and 
Sabine~Cornelsen\inst{2} \and 
Henry~F\"orster\inst{3} \and 
Michael~Kaufmann\inst{3} \and Martin~N\"ollenburg\inst{4}\orcidID{0000-0003-0454-3937} \and Yoshio~Okamoto\inst{5}\orcidID{0000-0002-9826-7074} \and 
Chrysanthi~Raftopoulou\inst{6} \and
Alexander~Wolff\inst{7}\orcidID{0000-0001-5872-718X}}
\authorrunning{E.~Argyriou et al.}
\institute{yWorks GmbH, T\"ubingen, Germany
\email{evmorfia.argyriou@yworks.com} \and
University of Konstanz, Germany 
\email{sabine.cornelsen@uni-konstanz.de} \and
University of T\"ubingen, Germany 
\email{\{foersth,mk\}@informatik.uni-tuebingen.de} \and
TU Wien, Vienna, Austria \email{noellenburg@ac.tuwien.ac.at} \and
University of Electro-Communications and RIKEN Center for
Advanced Intelligence Project, Ch\=ofu, Japan \email{okamotoy@uec.ac.jp } \and 
National Technical University of Athens, Greece
\email{crisraft@mail.ntua.gr} \and
University of W\"urzburg, W\"urzburg, Germany \\
}
\let\doendproof\endproof
\renewcommand\endproof{~\hfill$\qed$\doendproof}
\newcommand{\SC}[1]{SC$_{#1}$}
\newcommand{\OC}[1]{OC$_{#1}$}
\begin{document}

\maketitle

\begin{abstract}
  While \emph{orthogonal} drawings have a long history, \emph{smooth
    orthogonal} drawings have been introduced only recently.  So far,
  only planar drawings or drawings with an arbitrary number of
  crossings per edge have been studied.  Recently, a lot of research
  effort in graph drawing has been directed towards the study of
  beyond-planar graphs such as \emph{1-planar} graphs, which admit a
  drawing where each edge is crossed at most once.
  \improvement{%
    In this paper, we present algorithms that yield (smooth)
    orthogonal drawings of 1-planar graphs with small curve complexity
    such that the given embedding is preserved. The curve complexity
    improves if we can modify the embedding.  For the subclass of
    outer-1-planar graphs, which can be drawn such that 
    all vertices lie on the outer face, we achieve curve complexities
    that are optimal for this class and fixed embedding.}%
  {In this paper, we consider graphs with a fixed embedding. For
    1-planar graphs, we present algorithms that yield orthogonal
    drawings with optimal curve complexity and smooth orthogonal
    drawings with small curve complexity. For the subclass of
    outer-1-planar graphs, which can be drawn such that 
    all vertices lie on the outer face, we achieve optimal curve
    complexity for both, orthogonal and smooth orthogonal drawings.}
\end{abstract}

\section{Introduction}

Orthogonal drawings date back to the 1980's, with
Valiant's~\cite{Val81}, Leiserson's~\cite{Lei80} and
Leighton's~\cite{Lei84} work on VLSI layouts and floor-planning
applications and have been extensively studied over the years.  The
quality of an orthogonal drawing can be judged based on several
aesthetic criteria such as the required area, the total edge length,
the total number of bends, or the maximum number of bends per edge.
While schematic drawings such as orthogonal layouts are very popular
for technical applications (such as UML diagrams) still to date,  from
a cognitive point of view, schematic drawings in other applications
like subway maps seem to have disadvantages over subway maps drawn
with smooth B\'ezier curves, for example, in the context of path
finding~\cite{rnlhh-ovsmp-IJHCS13}.  In order to ``smoothen''
orthogonal drawings and to improve their readability, Bekos et
al.~\cite{BKKS13} introduced \emph{smooth orthogonal drawings} that
combine the clarity of orthogonal layouts with the artistic style of
Lombardi drawings~\cite{DEGKN12} by replacing sequences of ``hard''
bends in the orthogonal drawing of the edges by (potentially shorter)
sequences of ``smooth'' inflection points connecting circular arcs.
Formally, our drawings map vertices to points in~$\mathbb R^2$ and
edges to curves of one of the following two~types.

\begin{description}
\item[Orthogonal Layout:] Each edge is drawn as a sequence of vertical
  and horizontal line segments. Two consecutive segments of an edge
  meet in a bend.
\item[Smooth Orthogonal Layout~\cite{BKKS13}:] Each edge is drawn as a
  sequence of vertical and horizontal line segments as well as
  circular arcs: quarter arcs, semicircles, and three-quarter
  arcs. Consecutive segments must have a common tangent.
\end{description}

The maximum vertex degree is usually restricted to four since every vertex has
four available \emph{ports} (North, South, East, West), where the
edges enter and leave a vertex with horizontal or vertical
tangents. In addition, the usual model insists that no two edges
incident to the same vertex can use the same port.  Throughout this
paper, we restrict ourselves to graphs of maximum degree~four.

The \emph{curve complexity} of a drawing is the maximum number of
segments used for an edge. An \emph{\OC{k}-layout} is an orthogonal
layout with curve complexity~$k$, that is, an orthogonal layout with
at most $k-1$ bends per edge.  An \emph{\SC{k}-layout} is a smooth
orthogonal layout with curve complexity~$k$. 
For results, see Table~\ref{table:res}.

\begin{wrapfigure}[9]{r}{.45\linewidth}
  \vspace*{-6.3ex} 

\begin{subfigure}[b]{.48\linewidth}
    \centering
    \includegraphics[page=3]{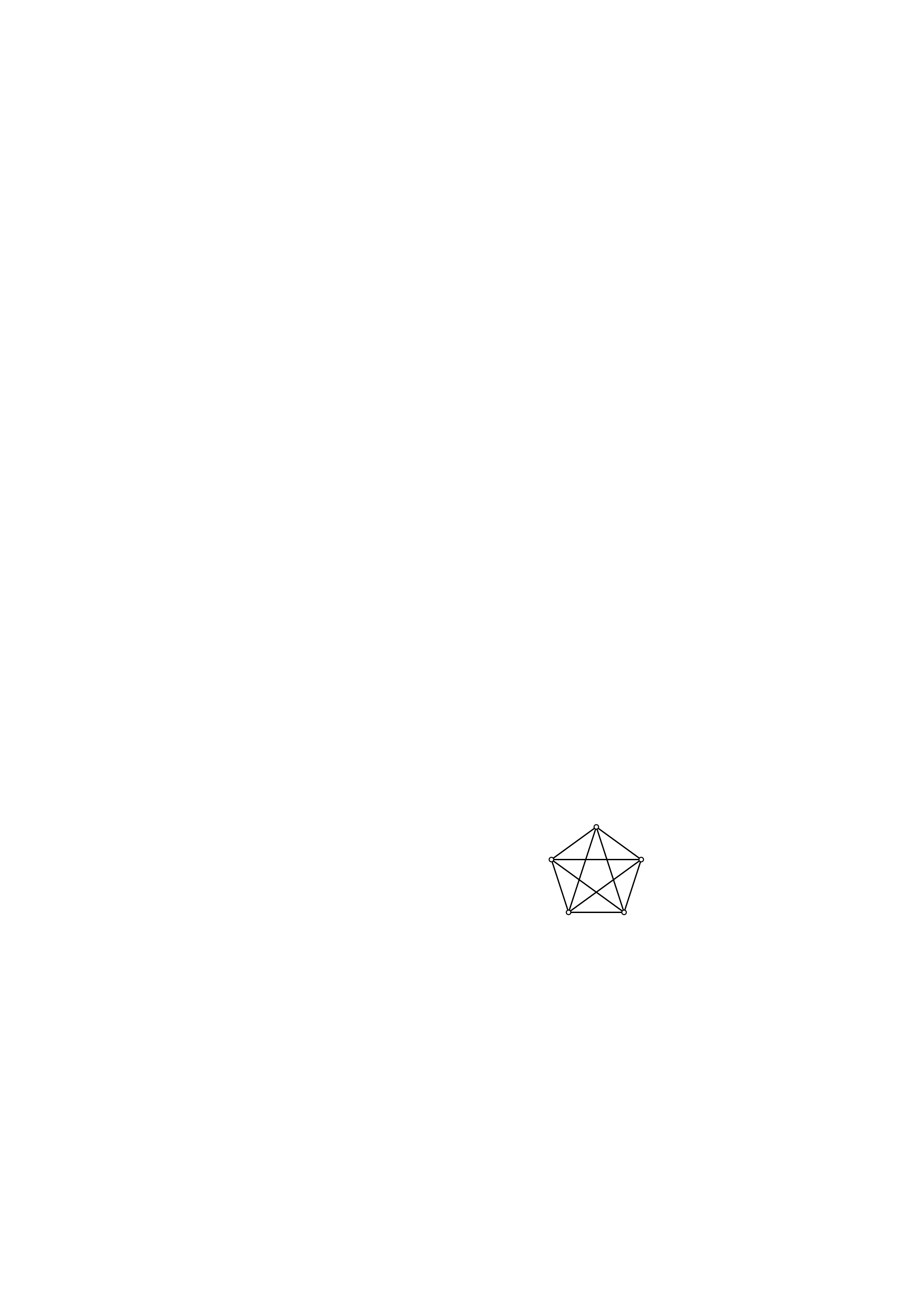}
    \caption{\OC{3}-layout}
    \label{fig:K5-oc3}
  \end{subfigure}
    \hfill
  \begin{subfigure}[b]{.48\linewidth}
    \centering
    \includegraphics[page=4]{K5}
    \caption{\SC{1}-layout}
    \label{fig:K5-sc1}
  \end{subfigure}

  \caption{Two 2-planar drawings of~$K_5$.}
  \label{fig:K5}
\end{wrapfigure}

The well-known algorithm of Biedl and Kant~\cite{BK98} draws any
connected graph of maximum degree~4 orthogonally on a grid of size $n
\times n$ with at most $2n + 2$ bends, bending each edge at most twice
(and, hence, yielding \OC3-layouts).   For the output of their algorithm applied
to~$K_5$, see Fig.~\ref{fig:K5-oc3}. Note that their approach
introduces crossings to the produced drawing.  For planar graphs, they describe how to obtain planar orthogonal
drawings with at most two bends per edge, except possibly for one edge
on the outer face.  

So far, smooth orthogonal drawings have been studied nearly
exclusively for planar graphs. Bekos et al.~\cite{BGPR14} showed how to compute an \SC1-layout for any
maximum degree~4 graph, but their algorithm does not
consider the embedding of the given graph.  For a drawing of~$K_5$
computed by their algorithm, see Fig.~\ref{fig:K5-sc1}.  Also, in the
produced drawings, the number of crossings that an edge may have is
not bounded.  Bekos et al.\ also showed that, if one does not restrict
vertex degrees, many planar graphs do not admit (planar) \SC1-layouts
under the \emph{Kandinsky model}, where the number of edges using the
same port is unbounded.  They proved, however, that all planar graphs
of maximum degree~3 admit an \SC1-layout (under the usual port
constraint).  For the same class of graphs, Alam et
al.~\cite{ABKKKW14} showed how to get a polynomial drawing area
($O(n^2) \times O(n)$) when increasing the curve complexity to \SC2.
Further, they showed that every planar graph of maximum degree~4
admits an \SC2-layout, but not every such graph admits an \SC1-layout
where the vertices lie on a polynomial-sized grid.  They also proved
that every biconnected outerplane graph of maximum degree~4 admits an
\SC1-layout (respecting the given embedding).

In this paper, we study orthogonal and smooth orthogonal layouts of
non-planar graphs, in particular, 1-planar graphs.  Recall that
$k$-planar graphs are those graphs that admit a drawing in the plane
where each edge has at most~$k$ crossings.  Our goal is to extend the
well-established aesthetic criterion `curve complexity' of (smooth)
orthogonal drawings from planar to 1-planar graphs.

1-planar graphs, introduced by Ringel~\cite{Ringel1965},
probably form the most-studied class of the 
\emph{beyond-planar} graphs, which extend the notion of planarity.
There are recent surveys on both 1-planar graphs~\cite{KobourovLM2017}
and beyond-planar graphs~\cite{DidimoLM2018}. 
Mostly, straight-line drawings have been studied for 1-planar graphs. 
While every planar graph has a planar straight-line drawing
(due to F\'ary's theorem), this is not true for 1-planar
graphs~\cite{eggleton:86,Thomassen1988}.  For the 3-connected case,
the statement holds except for at most one edge on the outer
face~\cite{abk-slgd3-GD13}.  Given a drawing of a 1-planar graph, one
can decide in linear time whether it can be ``straightened''
\cite{help-ft1pg-COCOON12}.

An important subclass of 1-planar graphs are \emph{outer-1-planar}
graphs.  These are the graphs that have a 1-planar drawing where every
vertex lies on the outer (unbounded) face.  They are planar graphs,
can be recognized in linear time~\cite{Hong2015,abbghnr-01pg-Alg16},
and can be drawn with straight-line edges and right-angle
crossings~\cite{dehkordi/eades:12}.

We are specifically interested in 1-plane and outer-1-plane graphs,
which are 1-planar and outer-1-planar graphs together with an
embedding.  Such an embedding determines 
the order of the edges around each vertex, but also which
edges cross and in which order.  
By the \emph{layout of a 1-plane graph} we 
mean that the layout respects the given embedding, without 
stating this again. In contrast, the \emph{layout of a 1-planar graph}
can have any 1-planar embedding.

\paragraph{Our contribution.}

Previous results and our contribution on (smooth) orthogonal 
layouts are listed in Table~\ref{table:res}.  We present new
layout algorithms for $1$-planar graphs in the orthogonal model
(Section~\ref{sec:ortho}) and in the smooth orthogonal model
(Section~\ref{sec:smooth}), achieving low curve complexity and
preserving 1-planarity.  We study 
1-plane graphs
as well as the special case of outer-1-plane graphs, where all
vertices lie on the outer face.  We conclude with some open problems; see Section~\ref{sec:future}.

In particular, we show that all 1-plane graphs admit 
\improvement{%
\OC5-layouts
(Theorem~\ref{thm:biconnected}) and, if they are biconnected,
\SC4-layouts (Theorem~\ref{thm:biconnected1plane-SC4}).  By altering
the initial embeddings, we can improve the curve complexity by one in
both models (Theorems~\ref{thm:1-planar_OC4}
and~\ref{thm:biconnected1planar-SC3}).}%
{\OC4-layouts (Theorem~\ref{thm:1-planar_OC4}) and \SC3-layouts
  (Theorem~\ref{thm:biconnected1planar-SC3}).}
  We also prove that all
biconnected outer-1-plane graphs admit \OC3-layouts
(Theorem~\ref{thm:outer1plane-OC3}) and \SC2-layouts
(Theorem~\ref{thm:biconnectedouter1plane-SC2}).  \improvement{The last two results}{Three out of these four results}
are worst-case optimal: There exist \improvement{}{biconnected 1-plane graphs that do not admit an \OC3-layout (Theorem~\ref{thm:biconnected1plane-not-OC3}) and} biconnected outer-1-plane graphs
that do not admit \OC2-layouts (Theorem~\ref{thm:outer1plane-not-OC2})
and \SC1-layouts (Theorem~\ref{thm:biconnectedouter1plane-not-SC1}).

\begin{table}[tb]
  \caption{Comparison of our results to previous work.  The model
    K(andinsky)-\SC1 does not restrict the number of edges per port to
    one.  ($^\star$) except for the octahedron (\OC4).  
    ``Super-poly'' means that the drawings are not known to be of
    polynomial size.}
  \label{table:res}
  \smallskip

  \centering
  \begin{tabular}{llcccc}
    \toprule
    &             & max. & curve       & drawing &  \\
    & graph class & deg. & complexity & area    & reference \\
    \midrule

    \parbox[t]{4mm}{\multirow{4}{*}{\rotatebox[origin=c]{90}{orthogonal}}} &
    general & 4 & \OC3 & $n \times n$ & \cite{BK98} \\
    & planar  & 4 & \OC3 ($^\star$) & $n \times n$ & \cite{BK98} \\[.75ex]
    & 1-plane & 4 & $\not\subseteq$ \OC3 / \improvement{\OC5}{\OC4} & $O(n) \times O(n)$ 
    & Thm.~\ref{thm:biconnected1plane-not-OC3} / \improvement{\ref{thm:biconnected}}{\ref{thm:1-planar_OC4}} \\
    \improvement{& 1-planar & 4 & \OC4 & $O(n) \times O(n)$ 
    & Thm.~\ref{thm:1-planar_OC4} \\}{}
    & biconnected outer-1-plane & 4 & $\not\subseteq$ \OC2 / \OC3 
    & $O(n) \times O(n)$  &
    Thm.~\ref{thm:outer1plane-not-OC2} / \ref{thm:outer1plane-OC3} \\

    \midrule
	
    \parbox[t]{4mm}{\multirow{10}{*}{\rotatebox[origin=c]{90}{smooth orthogonal}}} &
    planar & 4 & \SC2 & super-poly & \cite{ABKKKW14} \\
    & planar, poly-area & 4 & $\not\supseteq$ \SC1 & --- & 
    \cite{ABKKKW14} \\
    & planar, \OC2 & 4 & $\not\subseteq$ \SC1 & --- & \cite{ABKKKW14}\\
    & planar & 3 & \SC2 & $\lfloor {n^2}/4 \rfloor \times \lfloor n/2
    \rfloor$ & \cite{ABKKKW14} \\
    & planar & 3 & \SC1 & super-poly & \cite{BGPR14} \\
    & biconnected outerplane & 4 & \SC1 & super-poly &
    \cite{ABKKKW14} \\
    & general (non-planar) & 4 & \SC1 & $2n \times 2n$ & \cite{BGPR14} \\
    & planar & $\infty$ & $\not\subseteq$ K-\SC1 / K-\SC2 & $O(n) \times O(n)$ & \cite{BKKS13,BGPR14} \\[.75ex]
    & biconnected 1-plane & 4 & \improvement{\SC4}{\SC3} & \improvement{super-poly}{$O(n) \times O(n^2)$}
    & \improvement{Thm.~\ref{thm:biconnected1plane-SC4}}{Thm.~\ref{thm:biconnected1planar-SC3}} \\
    \improvement{& biconnected 1-planar & 4 & \SC3 & $O(n) \times O(n^2)$ 
    & Thm.~\ref{thm:biconnected1planar-SC3} \\}{}
    & biconnected outer-1-plane & 4 & $\not\subseteq$ \SC1 / \SC2 &
    super-poly
    & Thm.~\ref{thm:biconnectedouter1plane-not-SC1} /
    \ref{thm:biconnectedouter1plane-SC2} \\
	 
    \bottomrule
  \end{tabular}
\end{table}

\section{1-Planar Bar Visibility
  Representation}\label{sec:bar-visibility}

As an intermediate step towards orthogonal drawings, 
we introduce \emph{1-planar bar visibility
  representations}: Each vertex is represented as a horizontal
segment~--~called bar~--~and each edge is represented as either a
vertical segment or a polyline composed of a vertical segment and a
horizontal segment between the bars of its adjacent vertices. Edges
must not intersect other bars. If an edge has a horizontal segment, we
call it \emph{red}. The horizontal segment of a red edge must be on
top of its vertical segment and crosses exactly one vertical segment
of another edge~--~which is called \emph{blue}. The vertical segment
of a red edge must not be crossed; see
Fig.~\ref{fig:bar1planar_kites}.   We consider every edge
  as a pair of two \emph{half-edges}, one for each of its two 
  endpoints. Red edges are split at their bend~--~the \emph{construction bend}, such that
  each half-edge consists of either a vertical or a horizontal
  segment. Observe that horizontal half-edges are always red. 
 We show that every 1-planar graph has
a 1-planar bar visibility representation, following the approach of
Brandenburg~\cite{DBLP:journals/jgaa/Brandenburg14}:

For a 1-planar embedding, we define a \emph{kite} to be a $K_4$ induced by the
end vertices of two crossing edges with the property that each of the
four triangles induced by the crossing point and one end vertex of
each of the two crossing edges is a face. A crossing is \emph{caged}
if its end vertices induce a kite. Let now $G$ be a 1-planar graph. As
a preprocessing step, $G$ is augmented to a not necessarily simple
graph $G'$, with the property that any crossing is caged and no planar
edge can be added to $G'$ without creating a new crossing or a double
edge~\cite{abk-slgd3-GD13}.  
\improvement{Note that during this process the initial
embedding of $G$ might be altered.}{}
 
After the preprocessing step, all crossing edges are removed and a bar
visibility representation for the produced plane graph $G_p$ is
computed~\cite{DBLP:journals/dcg/RosenstiehlT86,DBLP:journals/dcg/TamassiaT86}. To this end 
an $st$-ordering of a biconnected supergraph of $G_p$ is computed, i.e., an
ordering $s=v_0$, $v_1$, $\ldots$, $v_{n-2}$, $v_{n-1}=t$ of the
vertices such that each vertex except $s$ and $t$ is adjacent to both,
a vertex with a greater and a lower index. The $st$-number is the
index of a vertex.  The y-coordinate of each bar is chosen to be the
$st$-number of the respective vertex.

Faces of size four that correspond to the kites of $G$ have three
possible configurations: left/right wing or diamond configuration.
Fig.~\ref{fig:bar1planar_kites} shows the configurations and how to
insert the crossing edges in order to obtain a 1-planar bar visibility
representation of~$G'$. Removing the caging edges results in a
1-planar bar visibility representation of~$G$. 

\begin{figure}[tb]
  \centering
  \begin{subfigure}[b]{.25\linewidth}
    \centering
    \includegraphics[page=1]{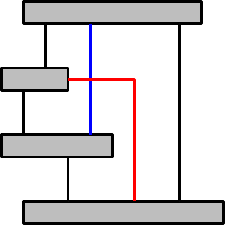}
    \caption{left wing}
    \label{fig:bar1planar_kites_left}
  \end{subfigure}
  \qquad
  \begin{subfigure}[b]{.25\linewidth}
    \centering
    \includegraphics[page=2]{bar1planarK4}
    \caption{right wing}
    \label{fig:bar1planar_kites_right}
  \end{subfigure}
  \qquad
  \begin{subfigure}[b]{.25\linewidth}
    \centering
    \includegraphics[page=3]{bar1planarK4}
    \caption{diamond}
    \label{fig:bar1planar_kites_diamond}
  \end{subfigure}
  \caption{Different configurations for kites in a 1-planar bar
    visibility representation.}
  \label{fig:bar1planar_kites}
\end{figure}

An edge is a \emph{left}, \emph{right}, \emph{top} or \emph{bottom
  edge} for a bar if it is attached to the respective side of that
bar.  Note that only red edges of $G$ can be left or right edges for
exactly one of their endpoints (and top edge for their other
endpoint). If a bar has no bottom (top) edges, it is a \emph{bottom}
(\emph{top}) bar, respectively. Otherwise it is a \emph{middle bar}.
For a bottom (top) bar, consider the x-coordinates of the touching
points of its edges. We define its \emph{leftmost} and \emph{rightmost
  edge} to be the edge with the smallest and largest x-coordinate,
respectively. If such a bar has a left or right edge then, by the
previous definition, this is its leftmost or rightmost edge,
respectively. Note that by the construction of the bar visibility
representation, each bar has at most one left and at most one right
red edge.

\section{Orthogonal 1-Planar Drawings}
\label{sec:ortho}

In this section, we examine orthogonal 1-planar drawings. In
particular, we give a counterexample showing that not every
biconnected 1-plane graph of maximum degree~4 admits an
\OC{3}-layout. On the other hand, we prove that every\improvement{ 
  biconnected}{} 1-plane graph of maximum degree~4 admits an
\improvement{\OC{5}}{\OC{4}}-layout that preserves the given
embedding\improvement{, and an \OC{4}-layout if we may change the
  embedding}{}.  \improvement{Additionally, for}{For} biconnected
outer-1-plane graphs we achieve optimal curve complexity~3.

\subsection{Orthogonal Drawings for General 1-Planar Graphs}

\begin{theorem}
  \label{thm:biconnected1plane-not-OC3}
  Not every biconnected $1$-plane graph of maximum degree~$4$ admits
  an \OC{3}-layout. Moreover, there is a family of
  graphs requiring a linear number of edges of complexity at least~$4$ in any \OC{4}-layout respecting the embedding.
\end{theorem}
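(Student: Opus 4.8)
The plan is to exhibit a single \emph{widget} — a caged crossing whose four kite-vertices are saturated to degree~$4$ — together with a rigid frame that makes the whole graph biconnected and of maximum degree~$4$, and that pins down the rotation system and the orientation of the crossing so that an adversary cannot escape the obstruction by re-embedding. Concretely, I would take the kite $K_4$ on $a,b,c,d$ with crossing $X$ of the diagonals $ac$ and $bd$, add one extra edge at each of $a,b,c,d$ leading into a surrounding frame, and wire the frame so that at every kite-vertex all four ports are occupied and the cyclic order of the three kite-edges is forced. For the first claim, a single widget embedded in such a frame already gives a biconnected $1$-plane graph; for the ``moreover'' part I would chain $k=\Theta(n)$ copies of the widget around a common cycle, keeping the copies disjoint on their crossed edges.

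First I would set up the lower-bound tool. Given any \OC{4}-layout respecting the embedding, planarize each crossing into a degree-$4$ dummy vertex; since the two crossing segments are axis-parallel and perpendicular, the dummy has all four angles equal to $90^\circ$ and both edges pass through it straight (no bend at the crossing). This produces a plane orthogonal representation, to which the standard angle/turn conditions apply: the angles around every (real or dummy) vertex sum to $360^\circ$, and walking around each internal face the total turn is $+360^\circ$. From the face condition one gets the \emph{triangle lemma}: every internal triangular face carries at least one net convex bend on its boundary, and strictly more whenever an interior angle exceeds $90^\circ$.

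The heart of the argument, and the step I expect to be the main obstacle, is to turn these local constraints into a forced bend on one designated edge per widget. Note that pure angle counting cannot suffice on its own: summed over all faces the bend contributions cancel (a bend is convex for one incident face and reflex for the other), so the total number of bends is not bounded below by Euler-type counting. Instead I would argue geometrically. Degree-$4$ saturation forces all four interior angles of the kite at $a,b,c,d$ to be exactly $90^\circ$, and the fixed rotation forces the port used by the crossed edge $e=ac$ at each of $a$ and $c$ up to the global orientation of the drawing; together with the perpendicular straight pass-through at $X$ and the fact that $e$ is confined to the corridor bounded by the frame (by $1$-planarity it may not leave its faces), the admissible directions at the two ends of $e$ and at its midpoint are mutually incompatible. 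I would then enumerate all orthogonal polylines from $a$ to $c$ with at most two bends consistent with these three constraints and show that none exists, so $e$ needs at least three bends, i.e.\ complexity at least~$4$.

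Finally I would assemble the two statements. A single widget gives a biconnected $1$-plane graph of maximum degree~$4$ with an edge of complexity at least~$4$, hence no \OC{3}-layout, proving the first claim. Chaining $\Theta(n)$ widgets and observing that their crossed edges are pairwise distinct yields $\Omega(n)$ edges of complexity at least~$4$ in every \OC{4}-layout, proving the ``moreover'' part; a matching \OC{4}-realization placing three bends on exactly these edges confirms that the bound concerns complexity exactly~$4$. The two difficulties to discharge carefully are making the embedding genuinely \emph{rigid} — so that biconnectivity together with caging pin the rotations and the crossing's orientation, and the adversary cannot re-embed around the obstruction — and verifying \emph{independence} of the forced edges across widgets, so that the $\Omega(n)$ count is not diluted by bends shared between neighbouring copies.
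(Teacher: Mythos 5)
There is a genuine gap, and it sits exactly where you predicted the ``main obstacle'' would be: your widget does not force any bends on the crossed edge. A caged crossing whose four kite-vertices are saturated admits the \emph{diamond} drawing: draw the cage $a,b,c,d$ as an axis-parallel rectangle with the four vertices at the midpoints of its sides (each cage edge is an $L$ with exactly one bend), put the crossing point at the center, and draw $ac$ as a single vertical segment and $bd$ as a single horizontal segment. This drawing respects the rotation system (at each kite-vertex the diagonal uses the port between the two cage edges, and the frame edge uses the outward port), has all angles $90^\circ$ at the saturated vertices, and crosses perpendicularly --- yet both crossing edges have \emph{zero} bends. Your claimed incompatibility between ``the admissible directions at the two ends of $e$ and at its midpoint'' is illusory: the embedding fixes which port (in cyclic order) the diagonal uses, but not its compass direction, so the drawing is free to orient the two diagonal ports toward each other. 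Your own triangle lemma confirms this: each of the four triangular kite faces needs only one net convex bend, and all four can be placed on the cage edges. No frame of maximum degree~$4$ is specified (or likely to exist) that excludes this configuration, so the enumeration step you defer would in fact find a $0$-bend polyline and the argument collapses.

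The fix is essentially the approach you talked yourself out of. The paper's proof is ``pure angle counting,'' but applied \emph{locally to a single face} rather than summed globally (which is where your cancellation objection applies): it takes the $1$-planar embedding of $K_5$ whose outer face is a triangle $T$ such that all free ports of the three triangle vertices lie inside $T$. Then the angle of each triangle vertex on the face bounded by $T$ that contains the rest of the graph is $270^\circ$, and the turn-sum condition for that face forces at least $7$ bends on the three edges of $T$; by pigeonhole some edge of $T$ has at least $3$ bends, so no \OC{3}-layout exists. For the linear bound, the paper chains $t$ copies of a $9$-vertex gadget containing such a saturated triangle into a biconnected cycle, giving $n=9t$ vertices and at least $t$ edges of complexity at least~$4$. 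So the forced edges are the \emph{uncrossed} triangle edges enclosing the rest of a gadget --- not the crossing edges, which in orthogonal layouts can essentially always be drawn cheaply once their endpoints' ports may face each other.
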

\begin{proof}
  Consider the 1-planar embedding of a $K_5$ as shown in
  Fig.~\ref{fig:K5-oc4}. The outer face is a triangle $T$ and all
  vertices have their free ports in the interior of $T$.  Hence, $T$
  has at least $7$ bends, and at least one edge of $T$ has at least
  $3$ bends.
 
  For another example refer to Fig.~\ref{fig:no-oc3}, where vertices
  $a$, $b$, and $c$ create a triangle with the same properties.  We
  use $t$ copies of the graph of Fig.~\ref{fig:no-oc3} in a column and
  glue them together by connecting the top and bottom gray vertices of
  consecutive copies with an edge, as well as the
  topmost vertex of the topmost copy and the bottommost vertex of the
  bottommost copy. The graph has $n=9t$ vertices and at least $t$
  edges of complexity at least~4.
\end{proof}

\begin{figure}[tb]
  \begin{minipage}[b]{.65\linewidth}
    \centering
    \begin{subfigure}[b]{.5\linewidth}
      \centering
      \includegraphics[page=6]{K5}
      \caption{$K_5$}
      \label{fig:K5-oc4}
    \end{subfigure}
    \hfill
    \begin{subfigure}[b]{.4\linewidth}
      \centering
      \includegraphics{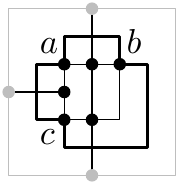}
      \caption{a 9-vertex graph}
      \label{fig:no-oc3}
    \end{subfigure}
    \caption{Biconnected 1-plane graphs without \OC{3}-layout}
  \end{minipage}
  \hfill
  \begin{minipage}[b]{.23\linewidth}
    \centering
    \includegraphics[page=2]{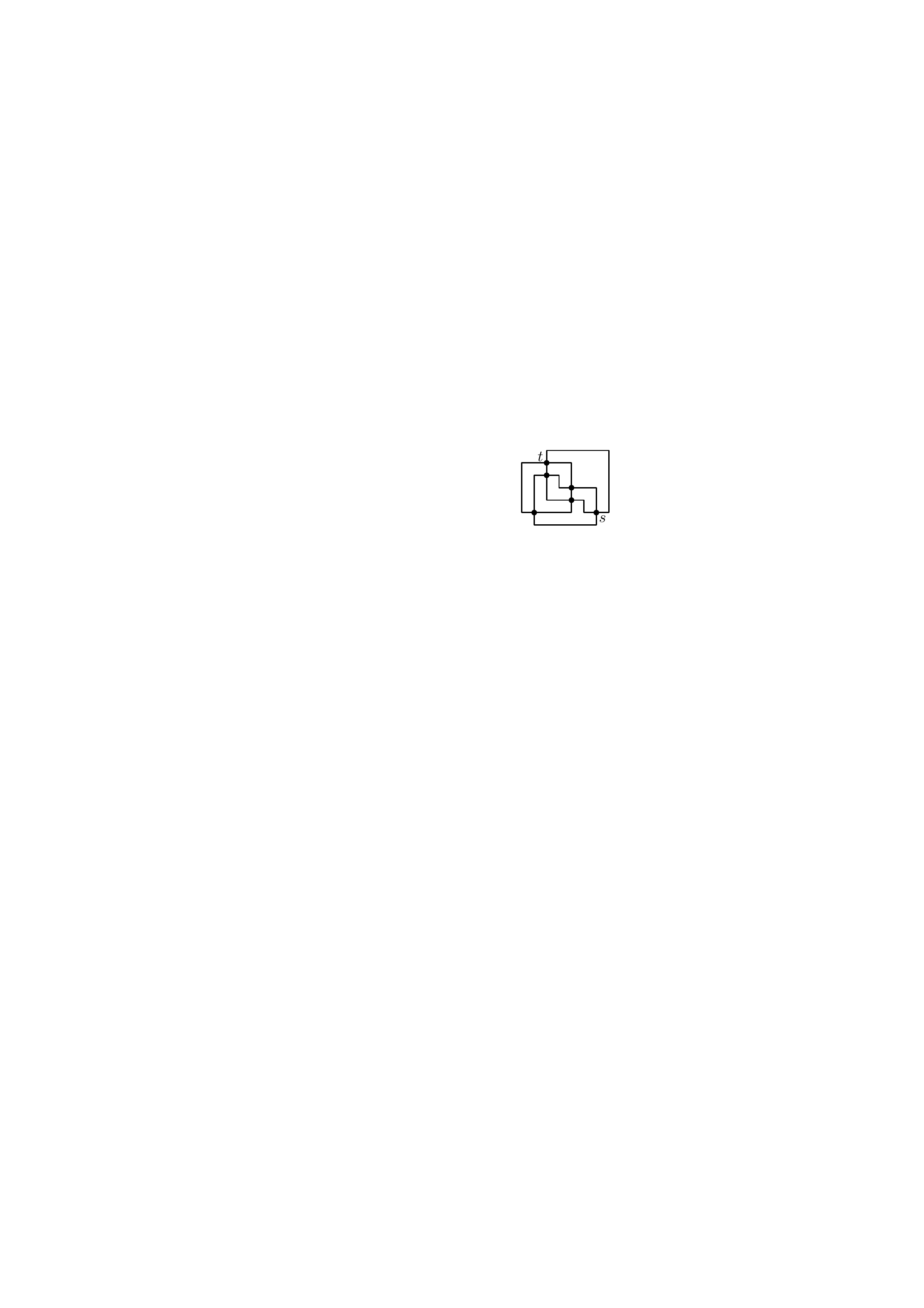}
    \vspace{8mm}
    \caption{Octahedron} 
    \label{FIG:octahedron}
  \end{minipage}
\end{figure}

\improvement{\begin{theorem}
  \label{thm:biconnected}
  Every $1$-plane graph of maximum degree~$4$ admits an \OC5-layout in an
  $n' \times n'$ grid, where $n'$ is the number of vertices plus the
  number of crossings.
\end{theorem}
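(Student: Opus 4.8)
The plan is to derive the \OC{5}-layout from the $1$-planar bar visibility representation of~$G$ built in Section~\ref{sec:bar-visibility}, by contracting every bar to a single vertex point and rerouting its incident half-edges to the four ports using at most four bends per edge. Since $G$ has maximum degree~$4$, each bar carries at most four half-edges, which appear, reading around the bar, as its bottom edges (left to right), then the right red edge, then its top edges (right to left), then the left red edge; I would map this cyclic list onto the cyclic port order $(S,E,N,W)$ so that the rotation system of the embedding is respected. For a \emph{middle} bar this assignment is immediate: by definition such a bar has between one and three top and between one and three bottom edges, so its top edges occupy $N$ together with $E$ and/or $W$, its bottom edges occupy $S$ together with the remaining side ports, and the at most one left (right) red edge takes $W$ ($E$); an edge routed out of $E$ or $W$ still leaves on the correct side of the bar. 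The genuinely delicate configurations are \emph{bottom} and \emph{top} bars, which, once the caging and biconnecting auxiliary edges are deleted, may carry up to four same-side edges; there one edge has to be sent out through the opposite port and around the bar through the space vacated by the removed edges.

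Next I would reroute every edge inside a thin tube around the polyline it occupied in the bar visibility representation, so that no new crossing is created and the only crossings are the caged $1$-planar crossings of Fig.~\ref{fig:bar1planar_kites}. A \emph{blue} (purely vertical) edge is a top edge at its lower endpoint and a bottom edge at its upper endpoint; after contraction it must join the two chosen ports to the common vertical track at the edge's original $x$-coordinate, costing at most one bend to turn out of a non-$N/S$ port and one jog to reach the track at each end, hence at most two bends per half-edge and four in total. A \emph{red} edge is seen as a horizontal (top) half-edge at one endpoint and as a left/right half-edge at the other: the route leaves the latter through its $E$ or $W$ port along the horizontal segment, makes the construction bend of Section~\ref{sec:bar-visibility}, runs down the vertical segment across its unique blue partner, and enters the former endpoint; collecting the two port turns, the construction bend, and the $x$-alignment jogs, this must be shown to stay within four bends.

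To fix coordinates and bound the area, I would keep the $y$-coordinate of each bar equal to its $st$-number, so the $n$ vertices occupy $n$ distinct rows, and then insert a constant number of additional rows and columns per crossing to accommodate the horizontal segments of red edges, the crossing points themselves, and the detour rows needed for the bends and the wrap-around routing. Compacting the resulting tracks to a unit grid then yields an $n'\times n'$ grid with $n'$ equal to the number of vertices plus the number of crossings, as claimed.

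The main obstacle is the simultaneous control of the red edges: each already carries a construction bend and participates in a crossing, so the port turns at its two ends plus the jogs aligning the $x$-coordinates must never exceed the remaining bend budget, and this has to hold at once for all red edges meeting a common vertex. Verifying that the port assignment of the first step is always compatible with such a four-bend routing---in particular for a bar that carries a left or right red edge together with several top/bottom edges, and for the bottom/top bars whose extra same-side edge must be wrapped around---is where the case analysis has to be carried out carefully, and it is also what dictates the few extra rows and columns charged to each crossing in the area bound.
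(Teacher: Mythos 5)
Your approach is genuinely different from the paper's, and it contains a gap that is fatal for the statement as written. The theorem concerns \emph{1-plane} graphs, so the \OC{5}-layout must respect the \emph{given} 1-planar embedding. Your construction starts from the 1-planar bar visibility representation of Section~\ref{sec:bar-visibility}, but building that representation requires first augmenting $G$ so that every crossing is caged~\cite{abk-slgd3-GD13}, and this preprocessing can alter the given embedding, as noted in Section~\ref{sec:bar-visibility}. This is precisely why the paper proves the present theorem by a different route and reserves the bar-visibility machinery for Theorem~\ref{thm:1-planar_OC4}, which (in the version of the paper containing the present theorem) is stated for 1-\emph{planar} graphs, i.e., allows re-embedding; indeed, once the embedding may change, four bends per edge is weaker than the three bends that Theorem~\ref{thm:1-planar_OC4} already achieves. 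So your plan, even if the case analysis were completed, would prove ``every 1-planar graph has an \OC{5}-layout in \emph{some} 1-planar embedding,'' not the claimed statement. For what it is worth, the bend counting you worry about does go through: without the matching argument of Theorem~\ref{thm:1-planar_OC4}, the worst case is a red edge collecting its construction bend plus one extreme horizontal half-edge and one extreme vertical half-edge, i.e., $1+1+2=4$ bends, and blue and uncrossed edges get at most $2+2=4$ bends, so \OC{5} is exactly the naive bound of that construction.

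The paper's own proof is much simpler and embedding-preserving: planarize $G$ by placing a dummy vertex at each crossing, obtaining a plane graph $G_p$ with $n'$ vertices; then apply the planar algorithm of Biedl and Kant~\cite{BK98}, which draws $G_p$ with at most two bends per edge on an $n' \times n'$ grid while respecting the embedding (the octahedron is the single exception and is handled by an ad-hoc drawing). At a dummy vertex the two halves of a crossed edge occupy opposite ports, so they pass through it without a bend, and every crossed edge of $G$ has at most $2+2=4$ bends, i.e., complexity five. Note also that this yields the stated $n' \times n'$ grid exactly, whereas your plan of inserting ``a constant number of additional rows and columns per crossing'' and then compacting only gives $O(n') \times O(n')$ with unspecified constants.
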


\begin{proof}
  Let $G$ be a 1-plane graph of maximum degree~4, and let $G_p$ be the
  planarization of~$G$ obtained by using dummy vertices at
  crossings. If $G_p$ is not the octahedron, the algorithm of Biedl
  and Kant~\cite{BK98} yields a drawing of $G$ where uncrossed edges
  have at most two bends and crossed edges have at most four bends.  If $G_p$ is
  the octahedron, we use the orthogonal drawing from Fig.~\ref{FIG:octahedron}. No matter which vertices are
  dummy, all edges of $G$ bend at most three times.
\end{proof}

By altering the initial embedding of the graph, we can guarantee better curve complexity, as stated in Theorem~\ref{thm:1-planar_OC4}
below. We}{In order to achieve an \OC4-layout for 1-plane graphs, we} will use a general property of orthogonal drawings
of planar graphs: Consider two consecutive bends on an edge $e$ with
an incident face $f$. We say that the pair of bends forms a
\emph{U-shape} if they are both convex or both concave in $f$ and an
\emph{S-shape}, otherwise. It follows from the flow model of
Tamassia~\cite{tamassia87} that if a planar graph has an orthogonal
drawing with an S-shape then it also has an orthogonal drawing with
the identical sequence of bends on all edges except for the two bends
of the S-shape that are removed. Thus, by planarization, any pair of
S-shape bends can be removed as long as the two bends are not
separated by crossings.

\begin{theorem}
  \label{thm:1-planar_OC4}
  Every $n$-vertex $1$-\improvement{planar}{plane} graph of maximum degree~$4$ admits an
  \OC4-layout on a grid of size $ O(n) \times O(n)$.
\end{theorem}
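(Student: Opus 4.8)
The plan is to start from the \OC5-layout produced by Theorem~\ref{thm:biconnected}, where every crossed edge bends at most four times and every uncrossed edge at most twice, and then eliminate one bend per crossed edge by exploiting the freedom to change the embedding together with the S-shape removal principle stated just above the theorem. Concretely, I would first recall that in the Biedl--Kant drawing each crossed edge of $G$ corresponds to two edges incident to a dummy crossing vertex in the planarization $G_p$; the four bends on such an edge are distributed as (at most) two bends on each half-edge joining a real vertex to the dummy vertex. The key observation I want to establish is that on at least one side of each crossing, the two half-edges can be routed so that the bends on the crossed edge form an S-shape that is \emph{not} separated by the crossing itself, so that Tamassia's flow-based argument lets us delete a pair of bends and drop the complexity from~$5$ to~$4$.

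The main steps, in order, would be: (i) take the planarization $G_p$ and its \OC5-drawing; (ii) classify each crossed edge according to the local bend pattern around its crossing (U-shape vs.\ S-shape on each of the two half-edges) using the definitions of convex/concave bends in the incident faces; (iii) for the cases that already contain a removable S-shape on a half-edge, apply the S-shape removal directly, obtaining at most three bends on that half-edge pair and hence at most four on the full crossed edge; (iv) for the remaining cases~--~where every crossed edge carries only U-shapes near its crossing~--~invoke the freedom to alter the embedding (already used in the bar-visibility preprocessing, where the initial embedding of $G$ may be changed) to reroute the crossing configuration so that an S-shape is created, and then remove it. Because the caged-kite structure from Section~\ref{sec:bar-visibility} localizes each crossing inside a $K_4$, each such rerouting is a purely local modification and does not affect bends elsewhere, so the bend counts compose correctly.

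Throughout, I would keep the vertices on the integer grid coming from the $st$-ordering of $G_p$, so that the area bound is inherited directly: the Biedl--Kant construction places $G_p$ on an $n' \times n'$ grid where $n'$ counts vertices plus crossings, and since a $1$-planar graph of maximum degree~$4$ has $O(n)$ crossings, we have $n' = O(n)$ and the drawing lives on an $O(n) \times O(n)$ grid as claimed. The S-shape removals and the local kite rerouting are bend-reducing or bend-preserving operations that do not increase the grid size asymptotically, so the area bound survives.

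The hard part will be step~(iv): showing that when no removable S-shape is present, changing the embedding locally at a crossing always produces one, while simultaneously guaranteeing that the resulting half-edges still have at most two bends each and that the modification is genuinely local (i.e.\ confined to the kite and its four ports). I expect this to require a careful enumeration of the left-wing, right-wing, and diamond configurations of Fig.~\ref{fig:bar1planar_kites}, checking in each that at least one of the four half-edges incident to the crossing can be flipped to an S-shape; the delicate point is controlling the interaction between the two crossing edges so that reducing the bend count on one does not force an extra bend on the other. Handling degree constraints at the four kite vertices, where free ports may already be occupied, is the subtlety I would need to resolve most carefully.
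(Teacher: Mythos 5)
Your proposal takes a genuinely different route from the paper, but it has a gap at exactly the step you yourself flag as ``the hard part,'' and in the version of the theorem the paper actually compiles that step is not even admissible. The statement is about $1$-\emph{plane} graphs, and the paper fixes the convention that a layout of a 1-plane graph must respect the given embedding. Your entire mechanism for the residual case~--~step~(iv), re-embedding locally at a kite so that a U-shaped bend pair becomes a removable S-shape~--~is therefore forbidden: no embedding change, however local, is allowed. (You also build on the \OC{5} theorem, which exists only in the other version of the source, where this theorem is instead stated for 1-planar graphs; there your plan would at least be admissible, but the second problem below remains.)

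Second, step~(iv) is a conjecture, not a proof. You correctly isolate the bad case~--~both half-edges of a crossed edge carrying U-shaped bend pairs, so that Tamassia's flow argument cannot cancel anything~--~but you give no argument that a local flip at the kite always produces a removable S-shape, nor that it does so without port conflicts at the four kite vertices or new bends on the other crossing edge. Worse, the Biedl--Kant drawing gives you no control over which bend patterns arise around dummy vertices (they are dictated by the global $st$-numbering and column assignment), and it does not even guarantee that the two half-edges of a crossed edge use opposite ports at the dummy vertex; if they do not, the two edges of $G$ meet in a right-angle touching point rather than a proper crossing, so your U/S classification ``around the crossing'' is not yet well defined. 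The paper sidesteps all of this by never starting from Biedl--Kant: it constructs the \OC{4}-layout directly from the 1-planar bar visibility representation of Section~\ref{sec:bar-visibility}, routes half-edges around bars so that a horizontal half-edge gains at most one bend and a vertical one at most two, and then uses a matching argument in an auxiliary bipartite graph $H$ (vertices are the top/bottom bars and their leftmost/rightmost edges; every component of $H$ is a path or a cycle, so a matching saturating all bar-vertices exists) to guarantee that no edge receives two ``extreme'' half-edges. That combinatorial choice, rather than any post-hoc bend removal or re-embedding, is what brings the complexity down to four while preserving the embedding and the $O(n)\times O(n)$ grid.
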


\begin{proof}
  Let $G$ be a 1-planar graph of maximum degree~4 and consider a
  1-planar bar visibility representation of $G$.  If $G$ is not
  connected, we draw each connected component separately, therefore we
  assume that $G$ is connected.


  Each vertex is placed on its bar.
  Figs.~\ref{fig:middleBar2Vertex} and~\ref{fig:bottomBar2Vertex}
  indicate how to route the adjacent half-edges.  Recall that the
  S-shape bend pairs can be eliminated.  Thus, a horizontal half-edge
  gets at most one extra bend and a vertical half-edge gets at most
  two extra bends; see Fig.~\ref{fig:bottomBar2Vertex}. We call a half-edge \emph{extreme} if it was
  horizontal and got one bend or vertical and got two bends that create a U-shape.

  \begin{figure}[tb]
    \centering
    \begin{subfigure}[b]{.31\linewidth}
      \centering
      \includegraphics[page=4]{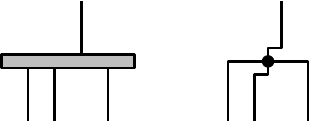}
      \caption{}
      \label{fig:middleBar2Vertex_1}
    \end{subfigure}
    \hfill
    \begin{subfigure}[b]{.31\linewidth}
      \centering
      \includegraphics[page=2]{middleBar2Vertex}
      \caption{}
      \label{fig:middleBar2Vertex_2}
    \end{subfigure}
    \hfill
    \begin{subfigure}[b]{.31\linewidth}
      \centering
      \includegraphics[page=1]{middleBar2Vertex}
      \caption{}
      \label{fig:middleBar2Vertex_3}
    \end{subfigure}

    \medskip

    \begin{subfigure}[b]{.31\linewidth}
      \centering
      \includegraphics[page=5]{middleBar2Vertex}
      \caption{}
      \label{fig:middleBar2Vertex_4}
    \end{subfigure}
    \hfill
    \begin{subfigure}[b]{.31\linewidth}
      \centering
      \includegraphics[page=6]{middleBar2Vertex}
      \caption{}
      \label{fig:middleBar2Vertex_5}
    \end{subfigure}
    \hfill
    \begin{subfigure}[b]{.31\linewidth}
      \centering
      \includegraphics[page=7]{middleBar2Vertex}
      \caption{}
      \label{fig:middleBar2Vertex_6}
    \end{subfigure}
    \caption{Replacing a middle bar with a vertex in the presence of
      (a)--(c)~zero, (d)--(e)~one, and (f)~two horizontal half-edges}
    \label{fig:middleBar2Vertex}
  \end{figure}

  \begin{figure}[tb]
    \centering
    \begin{subfigure}[b]{.25\linewidth}
      \centering
      \includegraphics[page=1]{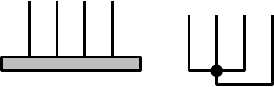}
      \caption{}
      \label{fig:bottomBar2Vertex_1}
    \end{subfigure}
    \hfill
    \begin{subfigure}[b]{.25\linewidth}
      \centering
      \includegraphics[page=2]{bottomBar2Vertex}
      \caption{}
      \label{fig:bottomBar2Vertex_2}
    \end{subfigure}
    \hfill
    \begin{subfigure}[b]{.3\linewidth}
      \centering
      \includegraphics[page=3]{bottomBar2Vertex}
      \caption{}
      \label{fig:bottomBar2Vertex_3}
    \end{subfigure}

    \medskip

    \begin{subfigure}[b]{.32\linewidth}
      \centering
      \includegraphics[page=4]{bottomBar2Vertex}
      \caption{}
      \label{fig:bottomBar2Vertex_4}
    \end{subfigure}
    \hfil
    \begin{subfigure}[b]{.32\linewidth}
      \centering
      \includegraphics[page=5]{bottomBar2Vertex}
      \caption{}
      \label{fig:bottomBar2Vertex_5}
    \end{subfigure}
    \caption{Replacing a bottom bar of degree $4$ with a vertex.}
    \label{fig:bottomBar2Vertex}
  \end{figure}
It suffices to show that the edges can be routed such that no edge
  is composed of two extreme half-edges. Even for red edges where we have the construction bend, we either get one extra bend from the horizontal (extreme) half-edge or two extra bends from the vertical (extreme) half-edge. Observe that an edge is
  extreme if and only if it is the rightmost or leftmost edge of a
  bottom or top bar, respectively, and it is attached to the bottom or
  top of the vertex, respectively. For each bottom or top
  bar we have the free choice to set either its rightmost or leftmost
  half-edge to become extreme.  Consider the following bipartite graph
  $H$. The vertices of $H$ are the top and bottom bars, as well as
  their leftmost and rightmost edges. A bar-vertex and an edge-vertex
  are adjacent in $H$ if and only if the bar and the edge are
  incident. Observe that each bar-vertex has degree two and each
  edge-vertex has degree at most two, thus $H$ is a union of disjoint
  paths and cycles and there is a matching of $H$ in which each
  bar-vertex is matched. 
This matching defines the extreme half-edges. It assigns exactly one half-edge to every bottom or top-bar and matches at most one half-edge of each edge. \end{proof}

\subsection{Orthogonal Drawings of Outer-1-Plane Graphs}
\label{sec:ortho.outer}

Since outer-1-planar graphs are planar
graphs~\cite{abbghnr-01pg-Alg16}, a planar orthogonal layout could be
computed with curve complexity at most three. For example, in
Fig.~\ref{fig:ortho_2_outer_counter_a} we can see an outer-1-plane
graph with a planar embedding in
Fig.~\ref{fig:ortho_2_outer_counter_b}.
Arguing similarly as we did for the proof of
Theorem~\ref{thm:biconnected1plane-not-OC3} it follows that there will be
at least two bends on an edge of the outer face. In this particular
case, Fig.~\ref{fig:ortho_2_outer_counter_c} shows an outer-1-planar
drawing of the same graph with at most two bends per edge. In the
following we compute 1-planar orthogonal layouts for biconnected
outer-1-planar graphs with optimal curve complexity three that also
preserve the initial outer-1-planar embedding.

\begin{theorem}
  \label{thm:outer1plane-not-OC2}
  Not every biconnected outer-$1$-plane graph of maximum degree~$4$
  admits an \OC{2}-layout.
\end{theorem}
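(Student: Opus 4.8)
The plan is to exhibit a single biconnected outer-$1$-plane graph $G$ of maximum degree~$4$ and to show, by an angle-counting argument on its outer face, that every embedding-preserving orthogonal layout of $G$ places at least two bends on some edge; hence $G$ admits no \OC{2}-layout. The guiding principle is the same as in the proof of Theorem~\ref{thm:biconnected1plane-not-OC3}: a vertex of degree~$4$ uses all four ports, so in any orthogonal layout the angle it forms in each incident face equals $90^\circ$. For a vertex on the outer boundary this is decisive: since $G$ is outer-$1$-plane, both non-boundary edges of a degree-$4$ boundary vertex are chords drawn inside the outer cycle, so the two boundary edges are consecutive on the outside and the outer face forms a convex ($90^\circ$) angle there. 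This is forced by the fixed embedding, which is exactly the $1$-plane setting of the theorem.

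The counting then goes as follows. Because $G$ is biconnected, its outer boundary is a simple cycle $v_1,\dots,v_n$, and its $n$ boundary edges are uncrossed, since all crossings are interior. Using Tamassia's flow/angle formulation~\cite{tamassia87}, the corners of the outer face satisfy $\sum(2-\alpha)=-4$, where $\alpha$ is the face angle in units of $90^\circ$ and each bend contributes $\alpha\in\{1,3\}$. Writing $C$ and $R$ for the numbers of convex and reflex bends on the boundary edges and $\alpha_i$ for the outer angle at $v_i$, this reads $\sum_i(2-\alpha_i)+C-R=-4$, so the total number of bends on the $n$ boundary edges is at least $|C-R|=4+2n-\sum_i\alpha_i$. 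A degree-$4$ boundary vertex forces $\alpha_i=1$, while a degree-$3$ boundary vertex can have $\alpha_i\le 2$ (its single chord must occupy the interior side, leaving at most two units on the outside). Hence, if $G$ is designed so that at most two boundary vertices have degree~$3$ (and none has smaller degree) and all remaining vertices have degree~$4$, then $\sum_i\alpha_i\le n+2$ and the boundary carries at least $n+2$ bends. Since there are only $n$ boundary edges, some edge must carry at least two bends, and no \OC{2}-layout exists.

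It remains to supply such a graph, and this construction is the main obstacle. By the known edge bound for outer-$1$-planar graphs, a graph in which almost all vertices have degree~$4$ is close to maximal, so $G$ must be built from carefully placed caged crossings (kites) together with uncrossed chords, in a way that keeps every vertex on the outer face, keeps each edge crossed at most once, preserves biconnectivity, and leaves at most two vertices of degree~$3$. I would construct $G$ explicitly, for instance as a maximal outer-$1$-plane graph on a small number of vertices as in the accompanying figure, and verify these degree and crossing conditions directly; the counting argument above is then routine. A useful sanity check is that the degree-deficient candidates fail for a concrete reason: the single kite $K_4$, for example, does admit an \OC{2}-layout precisely because it has too many low-degree boundary vertices to make $\sum_i\alpha_i$ small enough.
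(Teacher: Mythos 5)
Your angle-counting framework is internally consistent (the Tamassia identity for the outer face, $\alpha=1$ at degree-$4$ boundary vertices, $\alpha\le 2$ at degree-$3$ ones), but the proof never materializes because the graph it requires does not exist, and this is not a routine detail that can be deferred. You need a biconnected outer-$1$-plane graph of maximum degree~$4$ whose crossings are all interior (so that the boundary is an uncrossed $n$-cycle --- note this does not follow from biconnectivity alone, but you may impose it since you choose $G$), in which at most two vertices have degree~$3$ and all others have degree~$4$ (parity rules out three). One can show this is impossible. Root the laminar structure of chords and crossing pairs at an outer edge $e_0$ and take a unit of inclusion-minimal span. Under minimum degree~$3$, a minimal uncrossed chord would enclose only degree-$2$ vertices or be a parallel edge, so the minimal unit is a crossing pair all of whose inner arcs are empty, i.e., a kite on four consecutive boundary vertices $a,b,c,d$; its two middle vertices $b,c$ then have degree exactly~$3$, because any further chord at $b$ or $c$ would have to cross the already-crossed chord $ac$ or $bd$. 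Re-rooting at the outer edge $bc$ yields a second such kite whose middle vertices are necessarily distinct from $b$ and $c$, so every graph meeting your constraints has at least \emph{four} vertices of degree~$3$. With four degree-$3$ vertices your estimate only gives $\sum_i\alpha_i\le n+4$, hence at least $2n+4-(n+4)=n$ bends on $n$ boundary edges --- exactly one bend per edge, no contradiction. Tellingly, the paper's own counterexample (two kites joined by a crossing pair, $n=8$) has exactly four degree-$3$ vertices and thus also defeats the counting.

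The paper's proof is of a different, local nature: the kite embedding of $K_4$ has a \emph{unique} \OC{2}-layout; joining two copies of $K_4$ by a pair of crossing edges, the forced layouts of the two kites leave no way to draw the two connector edges with at most one bend each so that they still cross (and they must cross, since the embedding is fixed). If you want to rescue a counting argument, the crossing pair itself has to enter the count --- the ports forced at the kite corners and the angles around the crossing point --- which is essentially what the paper's case analysis does implicitly.
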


\begin{figure}[tb]
  \centering
  \begin{subfigure}[b]{.35\linewidth}
    \centering {\includegraphics[page=6]{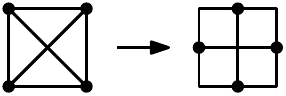}}
    \caption{outer-1-planar embedding}
    \label{fig:ortho_2_outer_counter_a}
  \end{subfigure}
  \hfill
  \begin{subfigure}[b]{.3\linewidth}
    \centering
    \includegraphics[page=7]{OC2_outer_counter_eg}
    \caption{planar embedding}
    \label{fig:ortho_2_outer_counter_b}
  \end{subfigure}
  \hfill
  \begin{subfigure}[b]{.3\linewidth}
    \centering
    \includegraphics[page=8]{OC2_outer_counter_eg}
    \caption{\OC{3}-layout}
    \label{fig:ortho_2_outer_counter_c}
  \end{subfigure}
  \caption{An outer-1-plane graph.}
  \label{fig:ortho_2_outer_example}
\end{figure}

\begin{figure}[tb]
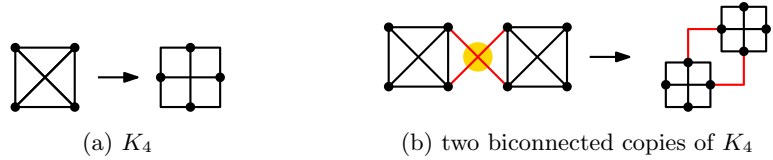

  \centering
  \begin{subfigure}[b]{.3\linewidth}
    \centering
    \includegraphics[page=1]{OC2_outer_counter_eg}
    \caption{$K_4$}
    \label{FIG:OC2_outer_counter_eg_component}
  \end{subfigure}
  \hfil
  \begin{subfigure}[b]{.67\linewidth}
    \centering
    \includegraphics[page=5]{OC2_outer_counter_eg}
    \caption{two biconnected copies of $K_4$}
    \label{FIG:OC2_outer_counter_eg_composing}
  \end{subfigure}

  \caption{Biconnected outer-1-plane graph that does not admit an
    \OC{2}-layout with the same embedding.}
  \label{fig:ortho_2_outer_counter}
\end{figure}

\begin{proof}
  $K_4$ is a biconnected outer-1-plane graph.  Actually, it has a unique \OC{2}-layout as shown in
  Fig.~\ref{FIG:OC2_outer_counter_eg_component}.  When connecting two
  copies of $K_4$ by two intersecting edges as in
  Fig.~\ref{FIG:OC2_outer_counter_eg_composing}, it is not possible to
  draw the resulting graph such that the connector edges intersect and
  have curve complexity~two.
\end{proof}

\begin{restatable}{theorem}{orthoOuter}
  \label{thm:outer1plane-OC3}
  Every biconnected outer-$1$-plane graph of maximum degree~$4$ admits
  an \OC{3}-layout in an $ O(n) \times O(n)$ grid, where $n$ is the
  number of vertices.
\end{restatable}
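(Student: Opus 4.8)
The plan is to place the $n$ vertices on a horizontal line in the cyclic order in which they appear on the outer face, and to realize every edge as an orthogonal bridge in the half-plane above the line, using at most two bends per edge. Since the graph is biconnected, its outer face is bounded by a Hamiltonian cycle $C=v_0,\dots,v_{n-1}$, and every remaining edge is a chord of~$C$. The starting observation is structural: in any outer-$1$-plane drawing all chords lie inside the disk bounded by~$C$, so two chords cross if and only if their endpoints interleave along~$C$, and $1$-planarity then forces each chord to interleave with at most one other chord. Thus the chords split into \emph{crossing pairs} and \emph{free chords}, and the entire crossing pattern is already determined by the cyclic vertex order. Placing $v_0,\dots,v_{n-1}$ at columns $1,\dots,n$ and ``unrolling'' the disk into the upper half-plane turns the embedding into a one-page layout in which interleaving is exactly crossing, so the given embedding is preserved for free.

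Next I would route the edges. Each edge $v_iv_j$ with $i<j$ becomes a bridge that leaves $v_i$, runs horizontally at some height, and descends to~$v_j$; assigning heights by nesting depth (outer chords higher, nested chords lower, and the two members of a crossing pair at distinct heights) makes a pair of free chords avoid each other while making a crossing pair meet in exactly one point, each edge using at most two bends. Because the nesting depth is $O(n)$, the horizontal segments occupy $O(n)$ distinct heights, giving the claimed $O(n)\times O(n)$ grid. Since every edge then has at most two bends, the curve complexity is~$3$, which is optimal for this class by Theorem~\ref{thm:outer1plane-not-OC2}.

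The delicate step is the \emph{port assignment}: a vertex offers only one port per side, so at most one edge may leave it in each of the four directions, whereas a degree-$4$ vertex carries two cycle edges and up to two chords. I would route the cycle edges mostly horizontally and distribute the incident chords among the remaining ports, occasionally lifting a cycle edge to a vertical port to make room. Combinatorially this amounts to deciding, consistently over the whole graph, on which ``side'' each chord leaves each of its endpoints, subject to two constraints: the two edges of a crossing pair must still be forced to cross exactly once, and two chords meeting at a common vertex must use different ports. Modelling chords as nodes, with an equality constraint inside each crossing pair and an inequality constraint between chords sharing a vertex, reduces feasibility to a $2$-colouring (bipartiteness) question.

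The main obstacle is precisely to show that this assignment is always feasible, i.e.\ that no inconsistent (odd) constraint cycle can occur. I expect this to follow from the at-most-one-crossing property: the obstructing configurations are triangles of mutually incident crossed chords, and any attempt to realize such a configuration forces some chord to be crossed twice, contradicting $1$-planarity (this is exactly what excludes the octahedron, which is not outer-$1$-planar). Concretely, I would analyse the local crossing configurations --- the left-wing, right-wing, and diamond kites of Section~\ref{sec:bar-visibility} --- to show that each crossing pair can be routed within a bounded region consistent with its four incident ports, and then argue that the resulting global constraint graph is bipartite. Establishing this bipartiteness cleanly, rather than the routing of any single edge, is where I expect the real work to lie.
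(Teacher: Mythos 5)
There is a genuine gap --- in fact two, one structural and one that you yourself flag as unresolved. First, your opening claim that biconnectivity forces the outer face to be bounded by a Hamiltonian cycle, with all remaining edges chords, is false for outer-$1$-plane graphs: the paper's own example of two copies of $K_4$ joined by a pair of crossing edges (Fig.~\ref{FIG:OC2_outer_counter_eg_composing}) is biconnected and outer-$1$-plane, yet its outer boundary traverses the two crossing connector edges, so consecutive vertices in the cyclic order need not be adjacent and no Hamiltonian cycle bounds the outer face. Second, the bend accounting for your spine layout does not survive the port constraint. If all $n$ vertices sit on one horizontal line and every edge is routed in the upper half-plane (which is what preserving the embedding forces, since the outer face must stay outside), then a vertex can only be entered through its North, East and West ports --- the South port points into the outer face --- so a degree-$4$ vertex cannot accommodate its four edges at all; and even for the edges you can attach, any edge that leaves one endpoint through an East or West port and enters the other through its North port already needs three bends (horizontal, up, across, down), not two. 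So ``at most two bends per bridge'' only holds for edges using North ports at both ends, which is exactly the resource that is scarce.

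This scarcity is the real content of the theorem, and it is the part you explicitly defer (``establishing this bipartiteness cleanly \dots is where I expect the real work to lie''): you never exhibit the constraint graph precisely, never prove it is bipartite, and never show that a feasible $2$-colouring yields crossings exactly where the embedding demands them. The paper attacks precisely this difficulty by a different route: it builds a $1$-planar bar visibility representation using the special $st$-ordering $s, S_l, S_r, t$ read off the two outer-face paths, reuses the vertex-replacement rules of Theorem~\ref{thm:1-planar_OC4}, and then makes an explicit left/right choice of which half-edge of each degree-$4$ top or bottom bar takes the North/South port, guided by whether the neighbour lies on $S_l$ or $S_r$. A case analysis then shows every edge with three or four bends carries an S-shaped bend pair on a vertical segment, which is removed either directly (Tamassia's flow argument) or, when that segment is crossed, by rerouting around the crossing --- this is where the left-wing/right-wing/diamond configurations actually enter, as concrete cases rather than as a source of a parity obstruction. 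So your proposal identifies the right obstacle but does not overcome it; as written it proves the theorem only for graphs of maximum degree~$3$ with uncrossed Hamiltonian outer boundary, a much weaker statement.
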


\begin{proof}[sketch]
  Let $G$ be an outer-1-planar graph of maximum degree~4. Observe
  that all crossings  can be caged without
  changing the embedding: A maximal outer-1-planar graph always admits
  a straight-line outer-1-planar drawing in which all faces are
  convex~\cite{eggleton:86,dehkordi/eades:12}. We would directly obtain the
  required curve complexity if there were no top or bottom bars of
  degree~4. Instead, our proof is
  based on a 1-planar bar visibility representation of $G$ produced by
  a specific $st$-ordering. Let $s$ and $t$ be two vertices on the
  outer face. Define $S_l$ and $S_r$ to be the sequences of vertices on the
  left path and on the right path from $s$ to $t$ along the outer face
  of $G$, respectively. We choose $s,S_l,S_r,t$ as our $st$-ordering.
  Observe that this is also an st-ordering of the caged and planarized graph $G_p$.
  
  We process middle bars as in the algorithm of
  Theorem~\ref{thm:1-planar_OC4}. For the top and bottom bars of degree 4
  we choose differently which half-edge will be attached to
  the north or south port, respectively.  Let $v$ be a vertex such
  that $b(v)$ is a top or bottom bar of degree 4. Let $e_l=(v,v_l)$
  and $e_r=(v,v_r)$ be its leftmost and rightmost edge,
  respectively. Assume that $v \in S_l \cup \{s\}$ and  $b(v)$ is
  a bottom bar. If $v_l\in S_l$, we choose edge $e_l$ to be attached
  to the south port of $v$, otherwise we choose edge $e_r$. If $b(v)$
  is a top bar of degree 4 we choose its leftmost edge $e_l$ to be
  attached to the north port of $v$. Symmetrically, if $v\in S_r \cup
  \{t\}$ and $b(v)$ is a top bar, we choose $e_r$ for the north port
  of $v$ if $v_r\in S_r$, otherwise we choose $e_l$. If $b(v)$ is
  a bottom bar we choose its rightmost edge $e_r$  for the south port
  of $v$.
  
  The above choice has the following property (detailed proof in  \arxapp{Appendix~\ref{apx:thm5}}{\cite{arxivVersion}}): Any edge with three or four bends contains
  two consecutive bends that create an S-shape. The two bends are
  always connected with a vertical segment.  If this is an uncrossed
  edge of $G$, the S-shape can be eliminated.  For crossing edges, we
  prove that only one edge per crossing may have more than two
  bends. If the vertical segment connecting the two bends of the
  S-shape is crossed, we apply the flow technique of
  Tamassia~\cite{tamassia87} around the crossing point and reduce the
  number of bends (for details refer to \arxapp{Appendix~\ref{apx:thm5}}{\cite{arxivVersion}}).
\end{proof}

\section{Smooth Orthogonal 1-Planar Drawings}
\label{sec:smooth}

In this section we examine smooth orthogonal 1-planar
drawings. In particular, we show that every\improvement{ biconnected}{} 1-plane graph
of maximum degree~4 admits an \improvement{\SC4}{\SC3}-layout that preserves the given
embedding\improvement{, and an \SC3-layout if we are allowed to change the
embedding}{}.  For biconnected outer-1-plane graphs, we achieve \SC2,
which is optimal for this graph class.

\subsection{Smooth Orthogonal Drawings for General 1-Planar Graphs}
\improvement{
\begin{theorem}
  \label{thm:biconnected1plane-SC4}
  Every biconnected $1$-plane graph of maximum degree~$4$ has an
  \SC{4}-layout.  The drawing area may be super-polynomial.
\end{theorem}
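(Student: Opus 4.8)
The plan is to construct the \SC{4}-layout directly from a $1$-planar bar visibility representation of $G$ (Section~\ref{sec:bar-visibility}) and to replace the orthogonal features of the half-edge routing used for Theorem~\ref{thm:1-planar_OC4} by smooth arcs. As in the orthogonal case I would place each vertex at a point of its bar and regard every edge as two half-edges meeting at a \emph{junction}: the midpoint for a plain (vertical) edge and the construction bend for a red edge (cf.\ Figs.~\ref{fig:middleBar2Vertex} and~\ref{fig:bottomBar2Vertex}). The target is to draw each half-edge with at most two smooth primitives, so that the two half-edges of an edge together use at most four segments, which is exactly curve complexity~$4$.

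The conversion rules are local. A half-edge that reaches its junction without turning becomes a single straight segment; a single $90^\circ$ turn becomes one quarter arc; and a $180^\circ$ turn --- the U-shape that remains after the S-shape eliminations of Theorem~\ref{thm:1-planar_OC4} --- becomes one semicircle. Before routing I would fix the tangent at every junction to be vertical and orient the two incident half-edges so that they pass through the junction with collinear, oppositely pointing tangents; this yields the common tangent required for a smooth edge. A plain edge then needs at most two primitives per half-edge. A red edge is drawn as a horizontal segment that carries its unique crossing with the blue partner, followed by a quarter arc realizing the construction bend and a vertical segment toward the other endpoint; allowing one extra turn for port alignment, it still stays within curve complexity~$4$.

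The main difficulty is geometric realizability, not combinatorics. I must fix the arc radii and the exact junction positions so that (i)~the two half-edges of each edge really meet with a common tangent, (ii)~every red edge crosses exactly its designated blue edge and no new crossing is created, so that $1$-planarity and the given embedding are preserved, and (iii)~distinct arcs and segments stay disjoint. The natural remedy is to process the bars in $st$-order and to stretch the vertical gaps between consecutive bars as much as needed, so that each arc has enough vertical room and large radii can be accommodated; this is precisely what forces the drawing area to become super-polynomial, as claimed. Biconnectivity is used to obtain a bar visibility representation with a well-behaved $st$-ordering in which the port pattern of every vertex is consistent, so that the local smoothing decisions can be taken independently and without port conflicts. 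Verifying (i)--(iii) simultaneously, and in particular keeping each red edge to a single crossing once its horizontal part has been bent into an arc, is the step I expect to be the most delicate.
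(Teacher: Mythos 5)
Your proposal takes a genuinely different route from the paper, but as written it has a real gap rather than just missing polish: the three conditions you label (i)--(iii) --- common tangents at junctions after radii are fixed, preservation of exactly one crossing per red--blue pair, and disjointness of all arcs and segments --- are not side conditions, they \emph{are} the proof, and you defer all of them to a stretching argument that is never carried out. In particular, you give no argument that stretching the vertical gaps between bars in $st$-order suffices to fit every semicircle and quarter arc without sweeping an arc across another bar or edge, and no argument that a red edge still crosses its designated blue partner exactly once after its construction bend is turned into a quarter arc (the crossing sits on the very horizontal segment you are bending). The paper does resolve exactly these issues, but only in the proof of the stronger Theorem~\ref{thm:biconnected1planar-SC3}: there the \OC{4}-layout of Theorem~\ref{thm:1-planar_OC4} is smoothed using dummy vertices at U-shape bends, an explicit per-vertex stretching scheme (the quantities $\Delta^\uparrow_i$, $\Delta^\downarrow_i$), the invariant that regions enclosed by U-shapes stay empty, and the observation that a red edge with an S-shape at its top endpoint has no bend at its bottom endpoint. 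Without an analogue of that machinery, your construction is a plan, not a proof --- though, if completed, it would in fact yield \SC{3} in polynomial area $O(n)\times O(n^2)$, not merely \SC{4}.

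For the \SC{4} statement itself, the paper's proof avoids all geometry by a black-box reduction that your proposal misses: planarize the given biconnected $1$-plane graph by placing a dummy vertex at each crossing (the planarization is still biconnected, of maximum degree~$4$, and inherits the embedding), and run the \SC{2}-layout algorithm of Alam et al.~\cite{ABKKKW14} on it. Every crossed edge of the original graph is then the concatenation of two half-edges, each of complexity at most two; since the rotation at a dummy vertex alternates between the two crossing edges, the two half-edges of one original edge occupy \emph{opposite} ports there, hence meet with a common (horizontal or vertical) tangent, so their concatenation is a legitimate smooth edge with at most four segments. Planarity of the intermediate drawing guarantees that each original edge keeps its single crossing, and the super-polynomial area is simply inherited from the cited algorithm. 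This is both simpler and complete, at the price of a weaker complexity bound than what your route aims for.
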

\begin{proof}
  After planarizing the given graph, we can apply the algorithm of
  Alam et al.~\cite{ABKKKW14} on the planar graph in order to produce
  an \SC{2}-layout
  respecting the given planar embedding.  In the resulting drawing,
  any original edge that is involved in a crossing may obtain up to
  two segments from each of its half-edges.  The drawing area is the
  same as that of the algorithm of Alam et al.~\cite{ABKKKW14}.
\end{proof}
}{}
\begin{figure}[tb]
  \centering
  \begin{subfigure}[b]{.15\linewidth}
    \centering
    \includegraphics[page=2]{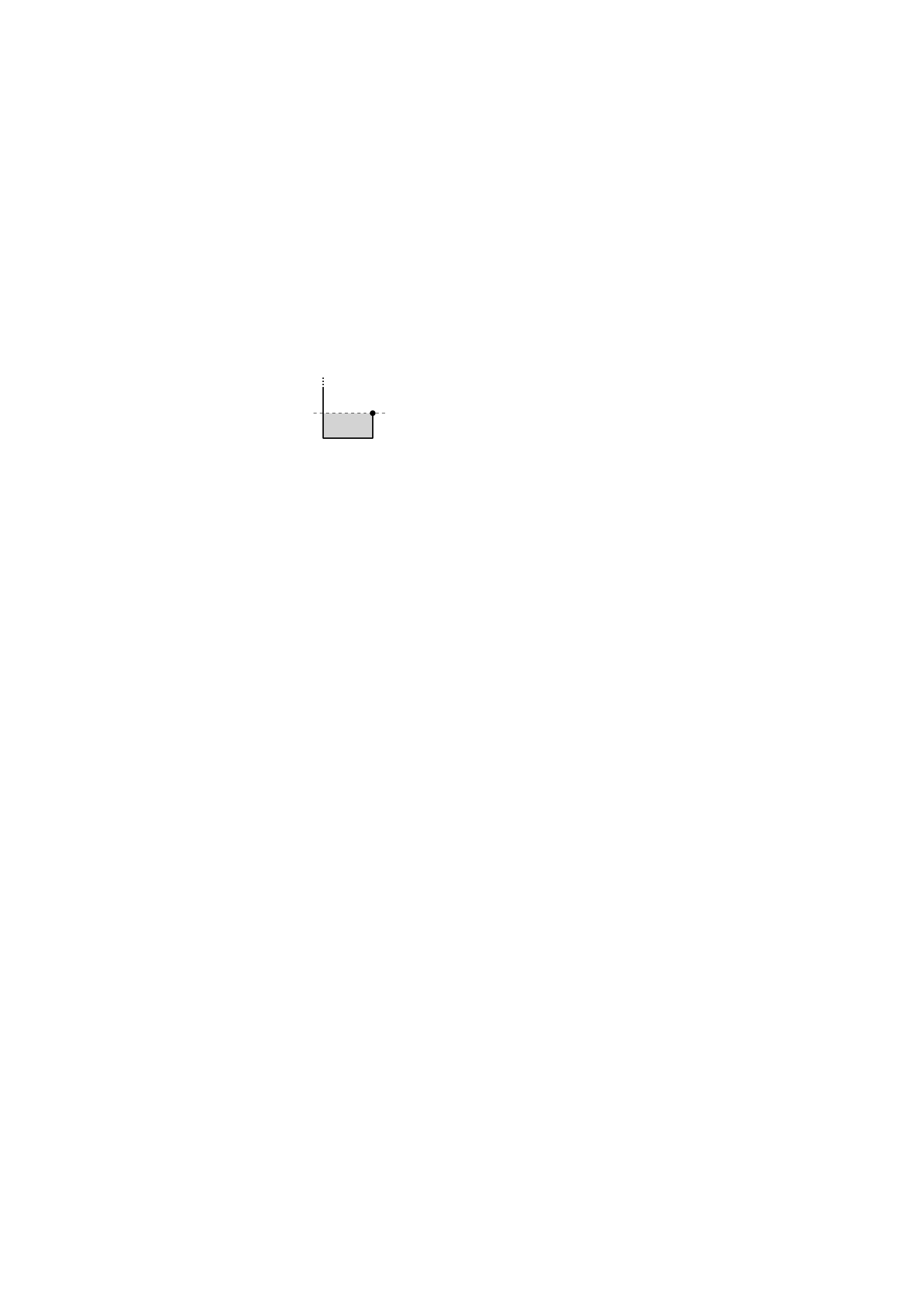}
    \caption{}
    \label{fig:nonIntersectedUShapesDummy}
  \end{subfigure}
  \hfil
  \begin{subfigure}[b]{.15\linewidth}
    \centering
    \includegraphics[page=5]{SC3Fixing}
    \caption{}
    \label{fig:nonIntersectedUShapesEmptyRegion}
  \end{subfigure}
  \hfil
  \begin{subfigure}[b]{.15\linewidth}
    \centering
    \includegraphics[page=7]{SC3Fixing}
    \caption{}
    \label{fig:nonIntersectedUShapesSmooth}
  \end{subfigure}

  \caption{Smoothing process of U-shapes created by top (bottom)
    bars.}
  \label{fig:intersectedUShapes}
\end{figure}

\begin{theorem}
  \label{thm:biconnected1planar-SC3}
  Every \improvement{biconnected $1$-planar}{$1$-plane} graph of maximum degree~$4$ admits an
  \SC{3}-layout in $O(n) \times O(n^2)$ area.
\end{theorem}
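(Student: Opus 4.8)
The plan is to smooth the orthogonal drawing produced for Theorem~\ref{thm:1-planar_OC4} rather than to start from scratch. Recall that that layout is obtained from a 1-planar bar visibility representation by placing each vertex on its own level and routing the half-edges via the matching argument so that no edge consists of two extreme half-edges; in particular every edge carries at most one U-shape, and each U-shape created at a top or bottom bar bounds an empty region (Fig.~\ref{fig:intersectedUShapes}). I would deliberately keep these U-shapes, rather than eliminate the remaining S-shapes, precisely because the empty regions are what leaves room for the smoothing arcs.

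First I would fix the local smoothing rules. Every isolated $90^\circ$ bend---in particular the construction bend of a red edge and the single bend of a horizontal or non-extreme half-edge---is realized by a quarter arc. Every U-shape coming from an extreme half-edge is realized by a single circular arc seated in its empty region: the incident port (the south port of a bottom bar, the north port of a top bar) already supplies a vertical tangent, so the arc can leave the vertex directly, turn $180^\circ$ inside the empty region, and come back without a straight lead-in. Whenever a U-shape is immediately followed by a further turn on the same edge---the typical case being a red edge whose vertical half-edge is extreme and then meets the construction bend---I would fold the $180^\circ$ of the U-turn and the adjacent $90^\circ$ into a single three-quarter arc, since the two turns wind in the same rotational direction; this collapses three orthogonal bends into one smooth segment.

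Next I would verify curve complexity by a short case analysis over the at most two half-edges of an edge plus the optional construction bend. Since the matching makes at most one half-edge extreme, the worst cases are: an extreme vertical half-edge (one semicircle or one three-quarter arc) together with at most one further bend on the other half-edge (one quarter arc) and a connecting straight segment; and a red edge whose construction bend and extreme half-edge are merged into a three-quarter arc followed by a straight run into a side port. In each case I would choose the radii of the arcs, exploiting the vertical slack from stretching the drawing, so that the exit height of the U-arc coincides with the level of the next turn, making the intermediate straight piece degenerate and keeping the total at three smooth segments.

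The hard part will be the global geometry: guaranteeing that the semicircles and three-quarter arcs, whose radii are forced to be proportional to the horizontal span of their U-shapes and hence as large as $\Theta(n)$, cross neither one another nor any bar or edge. This is exactly what the empty-region property buys, but to realize $\Theta(n)$ such arcs simultaneously I would stretch the $y$-coordinates so that consecutive levels are separated by an amount proportional to the largest radius nested below them; summing a linear radius over a linear number of levels gives height $O(n^2)$, while the $x$-coordinates of the bar visibility representation already give width $O(n)$. Disconnected inputs are handled, as in Theorem~\ref{thm:1-planar_OC4}, by drawing the components side by side.
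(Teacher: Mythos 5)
Your proposal follows essentially the same route as the paper's proof: start from the \OC{4}-layout of Theorem~\ref{thm:1-planar_OC4}, deliberately keep the U-shapes because they bound empty regions, replace each U-shape by a semicircle seated directly at the port that created it, replace the remaining bends by quarter arcs absorbed into the endpoints, and stretch the $y$-coordinates level by level by an amount bounded by the largest horizontal bend-distance ($O(n)$ per level, hence $O(n)\times O(n^2)$ area overall). The paper implements the stretching slightly more concretely (it subdivides each U-shape with a dummy vertex and shifts all higher levels by explicit amounts $\Delta^\uparrow_i$, $\Delta^\downarrow_i$), but this is the same idea as your ``separate consecutive levels by the largest radius nested below them.''

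There is, however, one claim in your argument that is false as stated: that when an extreme vertical half-edge of a red edge meets the construction bend, ``the two turns wind in the same rotational direction,'' so that the $180^\circ$ U-turn and the $90^\circ$ construction bend can be folded into a single three-quarter arc. Whether the windings agree depends on the configuration. If, say, the U-shape belongs to the rightmost edge of a bottom bar, it turns counterclockwise (south, then east, then north); the construction bend then turns counterclockwise only when the horizontal segment of the red edge extends \emph{west} toward its upper endpoint, and clockwise when it extends east --- and both wing configurations occur in 1-planar bar visibility representations. In the opposite-winding case the last bend of the U and the construction bend form an S, and no single circular arc can realize two turns of opposite sense. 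This flaw is not fatal, because that case collapses into your other pattern --- semicircle, vertical segment, and a quarter arc incident to the upper endpoint that absorbs the horizontal run --- which is three segments; indeed, this is exactly what the paper does, and it never needs three-quarter arcs at all. You should also spell out the symmetric red-edge case in which the \emph{horizontal} half-edge is the extreme one: its extra bend together with the construction bend forms an S-shape that is crossed by the blue edge and therefore cannot be eliminated; the paper replaces it by two \emph{consecutive} quarter arcs incident to the top endpoint, and your ``degenerate straight piece'' device yields three segments here only if you likewise place the two arcs back to back at the vertex.
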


\begin{proof}
  We compute an \SC{3}-layout based on an
  \OC{4}-layout computed by the algorithm of
  Theorem~\ref{thm:1-planar_OC4}. Observe that in the \OC{4}-layouts
  calculated by our approach, the area bounded U-shaped half-edges created at top and
  bottom bars is vertex-free (see gray area in Fig.~\ref{fig:nonIntersectedUShapesDummy}), and, each vertex is located on a separate
  level. We replace one bend of each U-shaped half-edge by a dummy
  vertex; see Fig.~\ref{fig:nonIntersectedUShapesDummy}.
  By doing so, we split each U-shaped half-edge into a vertical edge
  and an $L$-shaped half-edge. In the following, we treat the
  $L$-shaped half-edge as if the bend was on an $L$-shaped half-edge
  incident to the dummy vertex. We process $V=\{v_1,v_2,\ldots,v_n\}$
  in the ascending vertical order of vertices (including dummy
  vertices). For $v_i$, let $\Delta^\uparrow_i$ be the largest
  horizontal distance between $v_i$ and any bend on incident
  $L$-shaped half-edges leading to neighbors with larger
  index. Let $\Delta^\downarrow_i$ be the corresponding value
  for bends at incident $L$-shaped half-edges and construction bends
  of red edges incident to edges leading to neighbors with smaller
  index. We increase the y-coordinate of all $v_j$ with $j \geq i$ by
  $\Delta^\downarrow_i$ units and then the y-coordinate of all $v_k$
  with $k > i$ by $\Delta^\uparrow_i$ units. Bends on $L$-shaped
  half-edges and construction bends of
  red edges leading to neighbors with smaller index will be moved
  together with the corresponding vertex. Note that the region enclosed 
  by U-shapes created at top
  and bottom bars remains empty; see
  Fig.~\ref{fig:nonIntersectedUShapesEmptyRegion}. After the
  stretching, we remove the additional dummy vertices.

  Each U-shaped half-edge will be replaced by a semi-circle which
  fits into the corresponding stretched empty region. We place the semi-circle
  directly incident to the endpoint which created the U-shape; see
  Fig.~\ref{fig:nonIntersectedUShapesSmooth}. Then we replace each
  intersected $S$-shaped half-edge formed by a construction bend
   of a red edge by two
  consecutive quarter arcs incident to the top endpoint of the
  edge. Recall that if a red edge has an $S$-shape from its top
  vertex, it has no bend from its bottom vertex.  Further we replace
  each remaining bend by a quarter arc starting at the corresponding
  endpoint. Arcs at the two endpoints will be connected by a vertical
  segment. The correctness follows from the fact that the regions
  stretched to make space for drawing arcs were empty in the initial
  drawing.

  The area of the resulting drawing is $O(n) \times O(n^2)$ as the
  input drawing had $O(n)\times O(n)$ area and for every vertex the stretching operation increases the height by at most the length of the longest horizontal segment (i.e. $O(n)$).
\end{proof}

\subsection{Smooth Orthogonal Drawings for Outer-1-Plane Graphs}
\label{sec:smooth.outer}

We focus on smooth layouts of outer-1-plane graphs. We demonstrate
that curve complexity one is not always possible, but curve complexity
two can be achieved for biconnected outer-1-plane graphs.  We start
with the following observation.  The complete graph on four vertices
with free ports towards its outer face has a unique \SC{1}-layout,
shown in Fig.~\ref{fig:K4-sc1}. Removing one edge and restricting all
ports towards its outer face, there exist two \SC{1}-layouts, see
Figs.~\ref{fig:K4-e-sc1a} and~\ref{fig:K4-e-sc1b}.

\begin{figure}[tb]
  \centering
  \begin{subfigure}[b]{.12\linewidth}
    \centering
    \includegraphics[page=4]{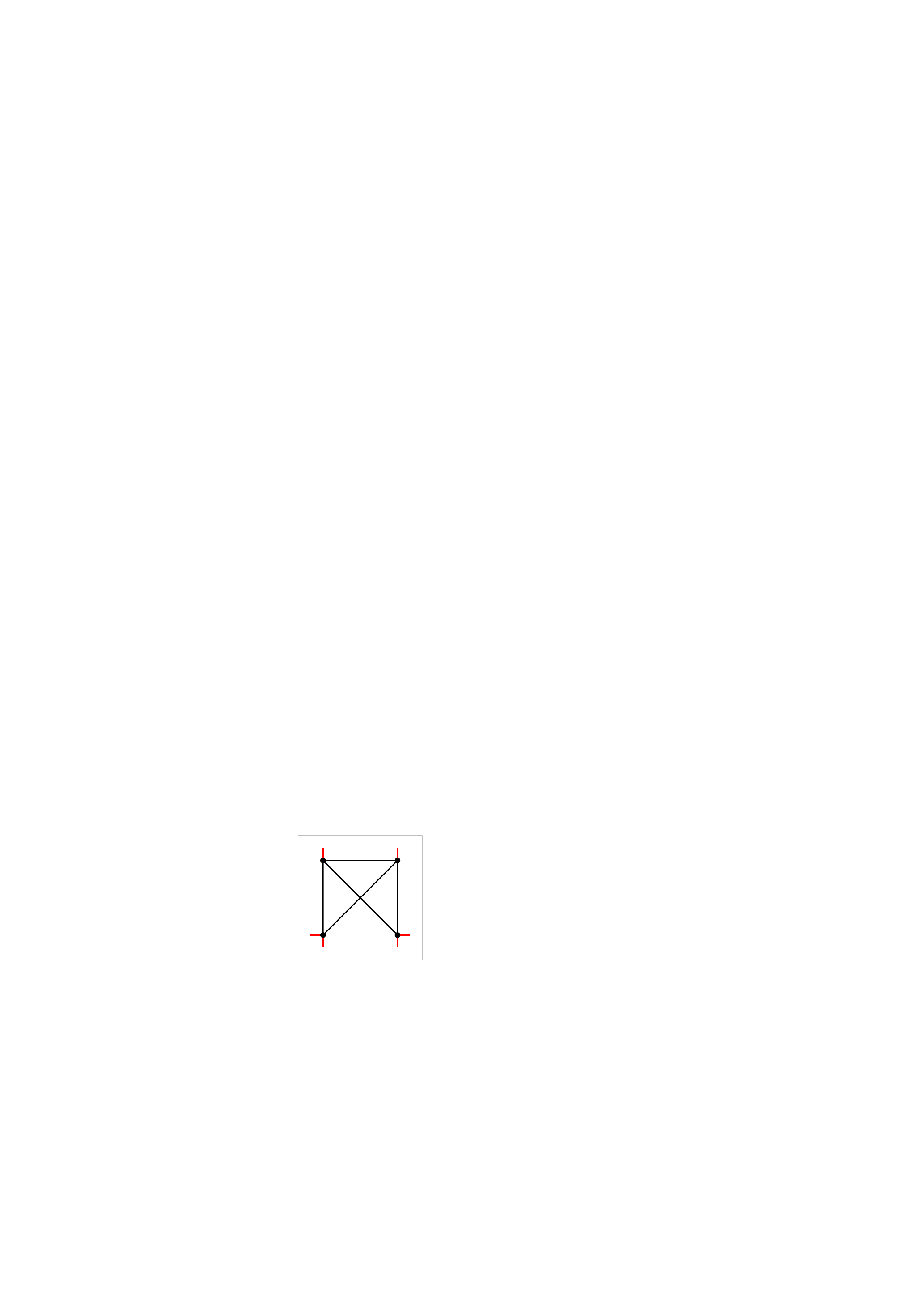}
    \caption{}
    \label{fig:K4-sc1}
  \end{subfigure}
  \hfill
  \begin{subfigure}[b]{.12\linewidth}
    \centering
    \includegraphics[page=2]{K4-e}
    \caption{}
    \label{fig:K4-e-sc1a}
  \end{subfigure}
  \hfill
  \begin{subfigure}[b]{.1\linewidth}
    \centering
    \includegraphics[page=3]{K4-e}
    \caption{}
    \label{fig:K4-e-sc1b}
  \end{subfigure}
  \hfill
  \begin{subfigure}[b]{.15\linewidth}
    \centering
    \includegraphics{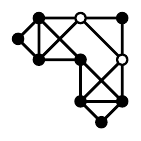}
    \caption{}
    \label{fig:no-sc1}
  \end{subfigure}
  \hfill
  \begin{subfigure}[b]{.13\linewidth}
    \centering
    \includegraphics[page=2]{K4-e-combined}
    \caption{}
    \label{fig:no-sc1-em1}
  \end{subfigure}
  \hfill
  \begin{subfigure}[b]{.13\linewidth}
    \centering
    \includegraphics[page=3]{K4-e-combined}
    \caption{}
    \label{fig:no-sc1-em2}
  \end{subfigure}
  \hfill
  \begin{subfigure}[b]{.13\linewidth}
    \centering
    \includegraphics[page=4]{K4-e-combined}
    \caption{}
    \label{fig:no-sc1-em3}
  \end{subfigure}
  \caption{(a)~\SC{1}-layouts for $K_4$ and (b)--(c)~for $K_4-e$ with
    restricted ports.  (d)~A biconnected outer-1-plane graph that does
    not have an \SC{1}-layout. (e)-(g) \SC{1}-layouts of a subgraph of (d).}
\end{figure}

\begin{theorem}
  \label{thm:biconnectedouter1plane-not-SC1}
  Not every biconnected outer-$1$-plane graph of maximum degree~$4$ has an
  \SC{1}-layout.
\end{theorem}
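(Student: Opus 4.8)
The plan is to exhibit the concrete graph $G$ of Fig.~\ref{fig:no-sc1} and prove that it admits no \SC1-layout respecting its (outer-1-plane) embedding. The graph is assembled from copies of $K_4-e$ together with the crossing edges that join them, in the same spirit as the construction of Theorem~\ref{thm:outer1plane-not-OC2}. The whole argument rests on the preliminary observations already made: $K_4$ with all ports toward the outer face has a unique \SC1-layout (Fig.~\ref{fig:K4-sc1}), and $K_4-e$ with restricted ports has exactly the two \SC1-layouts of Figs.~\ref{fig:K4-e-sc1a} and~\ref{fig:K4-e-sc1b}.

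First I would isolate the critical subgraph $H$ of $G$ drawn in Figs.~\ref{fig:no-sc1-em1}--\ref{fig:no-sc1-em3} and establish that, up to the symmetries compatible with the prescribed embedding, $H$ has \emph{exactly} the three \SC1-layouts shown there. This is the structural heart of the proof. Since every vertex has degree at most four, no two edges may share a port, and each edge must be realized by a single straight segment or a single quarter/semicircle/three-quarter arc, the placement of the edges of each $K_4-e$ block is forced to be one of the two layouts already identified for $K_4-e$. Gluing the blocks along their shared vertices and crossing edges, and insisting on the fixed rotation system and the prescribed crossing, then collapses the product of these local choices to precisely the three global configurations of (e)--(g).

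Next, for each of these three layouts I would add back the remaining edges of $G$ — the edges closing up the outer boundary together with the second connecting edge — and show that at least one of them cannot be drawn as a single \SC1-edge. The obstruction is a port/tangent conflict: once $H$ is fixed, the still-free ports at the relevant degree-four vertices have determined positions and determined outward tangent directions, and I would verify that no allowed edge — neither a straight segment nor an arc of any admissible type — connects the required pair of ports while simultaneously (i) using those prescribed ports, (ii) respecting the given cyclic order of edges around each endpoint, and (iii) crossing only the one edge permitted by 1-planarity. In each of the three cases the needed edge would require at least one bend or an inflection, i.e.\ curve complexity at least two, contradicting \SC1.

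The hard part will be the second step, and specifically the fact that the smooth model offers arcs as well as straight segments: unlike in the orthogonal (\OC) setting, two ports with matching tangent directions could in principle be joined by a semicircle rather than a straight segment, so ruling out a straight routing is not sufficient. The case analysis must therefore track, for each free port, both its location and its outward tangent, and confirm that every quarter, semicircle, or three-quarter arc leaving that port either lands at the wrong port, violates the embedding order around some vertex, or incurs a second crossing. Exploiting the rigidity forced by the nearly saturated degree-four vertices — where almost all ports are already occupied by $H$ — is what keeps this finite enumeration manageable and ultimately yields the contradiction.
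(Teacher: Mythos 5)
Your proposal takes essentially the same route as the paper's own proof: the paper takes the graph of Fig.~\ref{fig:no-sc1}, which contains two $K_4-e$ subgraphs with restricted ports (sharing a vertex, rather than being joined by crossing edges as you guessed), combines their two forced \SC{1}-layouts into the three drawings of Figs.~\ref{fig:no-sc1-em1}--\ref{fig:no-sc1-em3}, and then observes that the remaining edge between the two highlighted vertices cannot be added with curve complexity one in any of the three cases. Your two steps --- rigidity of the $K_4-e$ blocks forcing finitely many global layouts, then a port/tangent case analysis ruling out the last edge (including semicircular routings) --- are exactly the verification that the paper delegates to its figures.
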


\begin{proof}
  Take the graph in Fig.~\ref{fig:no-sc1}.  It has two subgraphs
  isomorphic to $K_4-e$ (with restricted ports) that share a vertex. Combining two drawings for both copies gives rise to the three drawings in Figs.~\ref{fig:no-sc1-em1}--\ref{fig:no-sc1-em3} in which the edge between the two highlighted vertices cannot be added with curve complexity one.
\end{proof}

To achieve \SC2-layouts for biconnected outer-1-plane
graphs\improvement{,}{ (see Fig.~\ref{FIG:sc2} for an example),} 
we modify the algorithm of Alam~et~al.\ \cite{ABKKKW14} for
outerplane graphs; see \arxapp{Appendix~\ref{apx:thm9}}{\cite{arxivVersion}} for details.

\begin{restatable}{theorem}{smoothOuter}\label{thm:biconnectedouter1plane-SC2}
  Every biconnected outer-$1$-plane graph of maximum degree~$4$ has an
  \SC{2}-layout.  The drawing area may be super-polynomial.
\end{restatable}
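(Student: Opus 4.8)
The plan is to reduce the problem to the planar case solved by Alam et al.~\cite{ABKKKW14} and then re-insert one crossing edge per crossing with a single extra segment. First I would exploit the combinatorial structure of a biconnected outer-$1$-plane graph $G$. Since $G$ is biconnected and all vertices lie on the outer face, the boundary of the outer face is a Hamiltonian cycle $C$, and every crossing edge is a chord of $C$: an edge on $C$ cannot be crossed, because its crossing point would have to lie in the outer face. For each (pairwise) crossing I would keep one of the two crossing chords and delete the other, collecting the deleted chords in a set $X$. The resulting graph $G'=G-X$ still contains the spanning cycle $C$, hence is biconnected; its remaining chords are pairwise non-crossing, so $G'$ is outerplane; and $\Delta(G')\le\Delta(G)\le 4$.

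Next I would run a suitably modified version of the SC1-algorithm of Alam et al.~\cite{ABKKKW14} on $G'$ to obtain an \SC1-layout respecting the given outerplane embedding, so that every edge of $G'$ — in particular every \emph{kept} chord — is drawn as a single segment or circular arc. Since their algorithm realises biconnected outerplane graphs of maximum degree four, this step is exactly where the super-polynomial area bound in the statement originates, and it is why the claim is restricted to the biconnected case. The modification I would impose is to control, for every deleted chord $e=(u,v)\in X$, the two ports at $u$ and $v$ that $e$ has vacated: I would force the layout so that these free ports face into a reserved, \emph{empty} region straddling the kept chord of the same crossing.

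Then I would re-insert each deleted chord $e=(u,v)\in X$ with curve complexity two. As $e$ crossed the kept chord of its crossing in $G$, the endpoints $u$ and $v$ lie on opposite sides of that kept chord; hence any curve from $u$ to $v$ inside the reserved empty region must cross the kept chord exactly once and meets nothing else. I would realise this curve as two pieces sharing a common tangent — either two quarter arcs, or a quarter arc and a straight segment — selected according to the relative positions of the two free ports, so that the first piece leaves $u$ through its free port, the second enters $v$ through its free port, and the shared inflection point lies inside the reserved region. Every edge of $G'$ retains its single-piece drawing and every edge of $X$ receives at most two pieces, so the result is an \SC2-layout preserving the embedding.

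The main obstacle I expect lies in the modification of Alam et al.'s algorithm in the second step: I must guarantee \emph{simultaneously} that (i)~each endpoint of a deleted chord still has a free port after the \SC1-layout is fixed, (ii)~these free ports point to the correct side of the kept chord, and (iii)~the routing corridor for the re-inserted chord is empty — all while the four vertices of a crossing may already have degree four. Handling this will likely require distinguishing the possible port configurations at the crossing vertices (analogous to the left-wing, right-wing, and diamond configurations of Fig.~\ref{fig:bar1planar_kites}) and verifying, for each, that two smooth pieces suffice and avoid neighbouring parts of the drawing; the remaining bookkeeping is then routine.
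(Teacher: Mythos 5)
There is a genuine gap, and it is in your very first step. Your structural claim --- that in a biconnected outer-$1$-plane graph the outer face is bounded by a Hamiltonian cycle of uncrossed edges, so that every crossing edge is a chord --- is false. Outer-$1$-planarity forbids neither crossings whose crossing point lies \emph{on the boundary} of the outer face, nor crossings that disconnect the planarization. The paper's own counterexample graph for Theorem~\ref{thm:outer1plane-not-OC2} (two copies of $K_4$ joined by a pair of crossing edges, Fig.~\ref{FIG:OC2_outer_counter_eg_composing}) is biconnected and outer-$1$-plane, yet the two connector edges cross on the outer face: there is no Hamiltonian cycle of uncrossed edges, and both edges of that crossing lie on the outer boundary. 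Your argument ``an edge on $C$ cannot be crossed, because its crossing point would have to lie in the outer face'' conflates lying in the interior of the outer face (impossible) with lying on its boundary (perfectly legal). This is exactly why the paper's proof classifies dummy vertices of the planarization $G_\mathrm{p}$ into \emph{in-dummies}, \emph{out-dummies}, and \emph{dummy-cuts}, and spends most of its effort on the latter two kinds (caging cycles of virtual edges around dummy-cuts, cut-faces, and the invariant I.1 governing out-dummies); your proposal silently assumes all crossings are of the in-dummy type.

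The failure is not cosmetic: on the two-$K_4$ example, deleting one edge of the crossing pair leaves the other as a bridge, so $G'=G-X$ is connected but \emph{not} biconnected, and the Alam et al.\ algorithm you invoke (which is stated for biconnected outerplane graphs) does not apply; there is also no ``kept chord'' with the two endpoints of the deleted edge on opposite sides in the sense you need. Even restricted to graphs whose crossings are all interior (where your reduction does produce a biconnected outerplane $G'$), the re-insertion step remains the hard part and cannot be deferred as routine: after an \SC1-layout is fixed, the two free ports of $u$ and $v$ and the geometry of the face containing the kept chord are whatever the algorithm produced, and guaranteeing a two-piece smooth curve that crosses the kept chord once and nothing else requires integrating the crossing edges into the face-by-face construction (as the paper does, e.g., by equalizing segment lengths at in-dummies in the facial-pair case), not post-processing. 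As it stands, the proposal proves the theorem for no class larger than graphs with only in-dummy crossings, and even there only modulo the unproved port/corridor guarantees you list as ``the main obstacle.''
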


\begin{proof}[sketch]
  The algorithm of Alam et al.~\cite{ABKKKW14} processes the faces of
  the graph along the \emph{weak-dual}, i.e., the dual graph omitting
  the outer face and rooted at some inner face.  For the next face,
  one of its edges (the \emph{reference edge}) is already
  drawn and imposes the drawing of the face.
  Figures~\ref{fig:alamEtAlOuterSmooth1ax}--\ref{fig:alamEtAlOuterSmooth3bx}
  show the different cases.

  We define an auxiliary graph $G'$: Let $G$ be a
  biconnected outer-1-plane graph, and let~$G_\mathrm{p}$ be the planarized
  graph of $G$, where crossing points are replaced with dummy
  vertices. Three types of dummy vertices exist in $G_\mathrm{p}$:
  \emph{dummy-cuts} (cut vertices), \emph{in-dummies} (only incident
  to inner faces), and \emph{out-dummies}.
  $G'$ contains all in-dummy and out-dummy
  vertices of $G_\mathrm{p}$, while dummy-cuts are replaced by a caging
  cycle. The face inside a caging cycle is called a \emph{cut-face}.
  All other faces are called \emph{normal}. Faces are processed
  along a traversal of the weak dual of $G'$. As $G'$ may not be
  outerplanar, its weak dual does not have to be
  acyclic. It contains cycles of length four around
  in-dummies (see Fig.~\ref{fig:virtualEdgesInDummyx}).  The auxiliary
  graph~$G'$ also contains \emph{virtual edges} that are red. These
  are edges added for caging dummy-cuts and edges added to complete
  the process of faces around an in-dummy.
  Figures~\ref{fig:normal1x}--\ref{fig:normal5x} show how to process
  normal faces not appearing in Alam et
  al.~\cite{ABKKKW14}. When processing a
  cut-face, we draw the crossing edges instead of the caging
  cycles; see
  Figs.~\ref{fig:smooth_cut1x}--\ref{fig:smooth_cut_virtual5x} for two
  out of ten cases. Finally, in order to draw the fourth face around
  an in-dummy, we ensure that the edge-segments incident
  to the dummy vertex have the same length; see
  Fig.~\ref{fig:facial_pair3x} for an example.
\end{proof}

\begin{figure}[tb]
  \centering
  \begin{subfigure}[b]{.135\linewidth} 
    \centering
    \includegraphics[page=1,width=\textwidth]{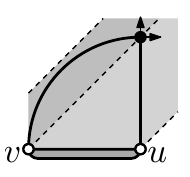}
    \caption{}
    \label{fig:alamEtAlOuterSmooth1ax}
  \end{subfigure}
  \hfill
  \begin{subfigure}[b]{.135\linewidth} 
    \centering
    \includegraphics[page=3,width=\textwidth]{alam-outer-smooth}
    \caption{}
    \label{fig:alamEtAlOuterSmooth2ax}
  \end{subfigure}
  \hfill
  \begin{subfigure}[b]{.135\linewidth} 
    \centering
    \includegraphics[page=5,width=\textwidth]{alam-outer-smooth}
    \caption{}
    \label{fig:alamEtAlOuterSmooth3ax}
  \end{subfigure}
  \hfill
  \begin{subfigure}[b]{.135\linewidth} 
    \centering
    \includegraphics[page=2,width=\textwidth]{alam-outer-smooth}
    \caption{}
    \label{fig:alamEtAlOuterSmooth1bx}
  \end{subfigure}
  \hfill
  \begin{subfigure}[b]{.135\linewidth} 
    \centering
    \includegraphics[page=4,width=\textwidth]{alam-outer-smooth}
    \caption{}
    \label{fig:alamEtAlOuterSmooth2bx}
  \end{subfigure}
  \hfill
  \begin{subfigure}[b]{.135\linewidth} 
    \centering
    \includegraphics[page=6,width=\textwidth]{alam-outer-smooth}
    \caption{}
    \label{fig:alamEtAlOuterSmooth3bx}
  \end{subfigure}

  \smallskip

  \begin{subfigure}[b]{.20\linewidth} 
    \centering
    \includegraphics[page=1, width=\textwidth]{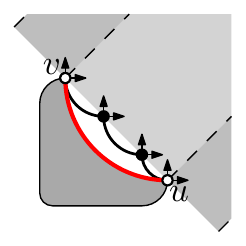}
    \caption{}
    \label{fig:normal1x}
  \end{subfigure}
  \qquad
  \begin{subfigure}[b]{.20\linewidth} 
    \centering
    \includegraphics[page=2, width=\textwidth]{smooth_normal}
    \caption{}
    \label{fig:normal2x}
  \end{subfigure}
  \qquad
  \begin{subfigure}[b]{.20\linewidth} 
    \centering
    \includegraphics[page=3, width=\textwidth]{smooth_normal}
    \caption{}
    \label{fig:normal3x}
  \end{subfigure}
  \qquad
  \begin{subfigure}[b]{.20\linewidth} 
    \centering
    \includegraphics[page=5, width=\textwidth]{smooth_normal}
    \caption{}
    \label{fig:normal5x}
  \end{subfigure}

  \smallskip

  \begin{subfigure}[b]{.22\linewidth} 
    \centering
    \includegraphics[page=1]{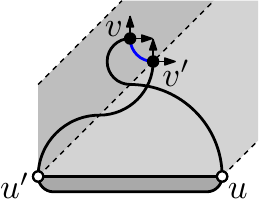}
    \caption{}
    \label{fig:smooth_cut1x}
  \end{subfigure}
  \hfill
  \begin{subfigure}[b]{.23\linewidth} 
    \centering
    \includegraphics[page=5]{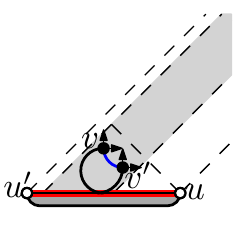}
    \caption{}
    \label{fig:smooth_cut_virtual5x}
  \end{subfigure}
  \hfill
  \begin{subfigure}[b]{.23\linewidth} 
    \hfill
    \includegraphics[page=2]{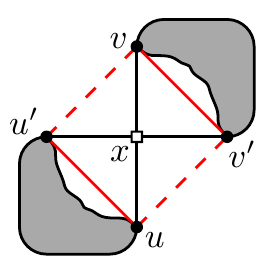}
    \caption{}
    \label{fig:virtualEdgesInDummyx}
  \end{subfigure}
  \hfill
  \begin{subfigure}[b]{.18\linewidth} 
    \centering
    \includegraphics[page=4]{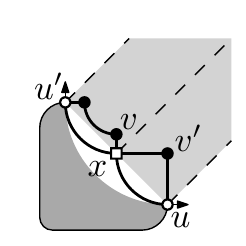}
    \caption{}
    \label{fig:facial_pair3x}
  \end{subfigure}
  \caption{Constructing an \SC{2}-drawing of biconnected outer
    1-planar graphs.}
  \label{fig:alamEtAlOuterSmoothx}
\end{figure}

\improvement{}{
\begin{figure}
  \begin{center}
    \includegraphics[trim=0 170 0 0, clip, scale=1]{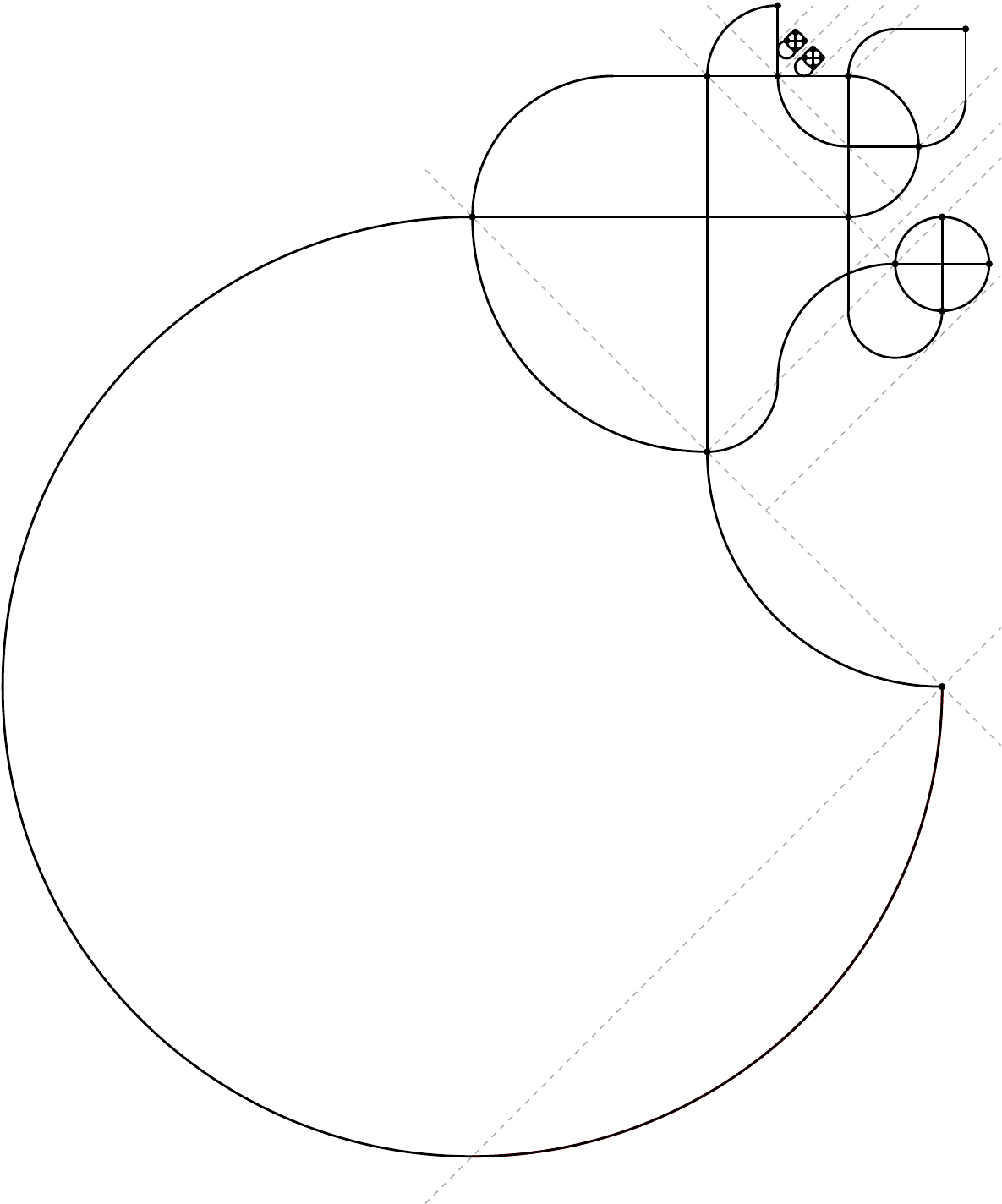}
  \end{center}
  \caption{\label{FIG:sc2}\SC2-layout of an outer-1-plane graph. Largest 3/4-arc only partially drawn.}
\end{figure}
}

\section{A List of Open Problems}
\label{sec:future}
\begin{itemize}
\item Can we improve our curve complexity bounds if we restrict
  ourselves to more strongly connected classes of graphs (of maximum
  degree~4)?
\item Candidate subclasses of outer-1-plane graphs for \SC1-layouts
  are for example outer-IC-plane graphs where crossings are
  independent.  A possible variant would be to allow degenerate
  layouts where pairs of edges can touch but not cross.
\item \improvement{Do $1$-plane graphs always admit an \OC4 layout?
  Are there lower bounds on the curve complexity for orthogonal
  layouts of $1$-planar graphs and smooth orthogonal layouts of
  $1$-plane and $1$-planar graphs.}{Is there a 1-plane graph that does
  not admit an \SC2-layout?
\item Do biconnected outer-1-plane graphs admit an \SC2-layout with
  polynomial drawing area?}
\item Do similar results also hold for $2$-planar
  graphs and more generally beyond-planar graphs?
\end{itemize}


\clearpage

\bibliographystyle{splncs04} 
\bibliography{abbrv,smooth-v02}


\arxapp{\newpage
\appendix

\section*{Appendix}

\section{Additional Material for Section~\ref{sec:ortho.outer}}
\label{apx:thm5}

\orthoOuter*

\begin{proof}
  Let $G$ be an outer-1-planar graph of maximum degree~4.  We want to
  use again a 1-planar bar visibility representation.  First observe
  that all crossings in an outer-1-planar graph can be caged without
  changing the embedding: A maximal outer-1-planar graph always admits
  a straight-line outer-1-planar drawing in which all faces are
  convex~\cite{eggleton:86,dehkordi/eades:12}. We would obtain the
  required curve complexity if there were no top or bottom bars of
  degree~4. Instead we will work with a specialized $st$-ordering.

  Let $s$ and $t$ be two vertices on the outer face. Let $S_l$ and
  $S_r$ be the vertices on the left path and the right path from $s$
  to $t$ along the outer face of $G$, respectively. Note that due to
  biconnectivity each vertex appears at most once in $S_l$ or $S_r$.
  We choose $s,S_l,S_r,t$ as the $st$-ordering for the construction of
  the 1-planar bar-visibility representation of $G$.

  We want to replace every bar in a similar way as in the proof of
  Theorem~\ref{thm:1-planar_OC4}. For the top and bottom bars of
  degree 4 we make a different choice based on which half-edge will be
  attached to the north or south port, respectively.

  Let $v$ be a vertex such that $b(v)$ is a top or bottom bar of
  degree 4. Let $e_l=(v,v_l)$ and $e_r=(v,v_r)$ be its leftmost and
  rightmost edges, respectively. Assume first $v \in S_l \cup \{s\}$
  and that $b(v)$ is a bottom bar. If $v_l\in S_l$, we select edge
  $e_l$ to be attached to the south port of $v$, otherwise we select
  edge $e_r$. If $b(v)$ is a top bar of degree 4, then all its
  neighbors appear before $v$ in the $st$-ordering and they belong to
  $S_l$, hence we choose its leftmost edge $e_l$ to be attached to the
  north port of $v$. Symmetrically, if $v\in S_r \cup \{t\}$ and
  $b(v)$ is a top bar, we choose $e_r$ for the north port of $v$ if
  $v_r\in S_r$, otherwise we choose $e_l$. And if $b(v)$ is a bottom
  bar, all its neighbors appear after $v$ and we choose its rightmost
  edge $e_r$ for the south port of $v$.

  We claim that the above choice creates at most four bends, and in
  the case where three or four bends appear, two of them create an
  S-shape and are vertically aligned. Note that if the vertical
  segment connecting the two bends is not crossed, then the two bends
  can be eliminated and the theorem holds.  So, consider an edge
  $e=(u,v)$ such that $u$ has a lower index than $v$. Three or more
  bends appear only if edge $e$ uses the south port of $u$ and/or the
  north port of $v$. There are three cases, depending on whether $u$,
  $v$ belong to $S_l \cup \{s\}$ or $S_r \cup \{t\}$.

  {\bf Case 1:} Suppose that $u,v\in S_l \cup \{s\}$. Assume first
  that $e$ uses the south port of $u$. Then $b(u)$ is a degree~4
  bottom bar and $e$ is the leftmost edge of $b(u)$. All other
  neighbors of $u$ come after $v$ in the $st$-ordering. If there
  exists another edge attached to the bottom of $b(v)$ then this edge
  can only be incident to vertices with indices between $u$ and $v$,
  and therefore $e$ is the rightmost edge at the bottom of $b(v)$;
  refer to Fig.~\ref{fig:outer-1-planeOrtho1}. When replacing $b(v)$
  with vertex $v$, edge $e$ can only use the south or east port of
  $v$. This is true even in the case where $b(v)$ is a degree~4 top
  bar since in that case the north port will be used by the leftmost
  edge of $b(v)$ and not by $e$. There are three bends only if $e$
  uses the east port of $v$ (see
  Fig.~\ref{fig:outer-1-planeOrtho2}). Two of them form an S-shape
  and are connected by a vertical segment as claimed.

  Assume now that $e$ uses the north port of $v$. Then $b(v)$ is a
  degree~4 top bar and $e$ is the leftmost edge of $b(v)$. All other
  neighbors of $v$ come before $u$ in the $st$-ordering, and arguing
  similarly as before, we can conclude that $e$ cannot be the leftmost
  edge at the top of $b(u)$. Hence, $e$ will use either the north or the
  east port of $u$. Three bends are created only if $e$ uses the east
  port of $u$ (see Fig.~\ref{fig:outer-1-planeOrtho3}) and the claim
  holds.
	
  {\bf Case 2:} The case where $u,v\in S_r \cup \{t\}$ is similar to
  the case where $u,v\in S_l \cup \{s\}$ and is depicted in
  Figs.~\ref{fig:outer-1-planeOrtho4}-\ref{fig:outer-1-planeOrtho6}.
	
  {\bf Case 3:} The last case that remains to consider for our claim,
  is the case where $u\in S_l \cup \{s\}$ and $v\in S_r \cup \{t\}$.
  Here, if $e$ uses the south port of $u$, then $b(u)$ is a degree~4
  bottom bar and $e$ is its rightmost edge.  It is not hard to see
  that $e$ is
  either the left edge of $b(v)$ or
  the leftmost edge attached to the bottom of $b(v)$. Therefore $e$
  will not use the east port of $v$. Similarly, if $e$ uses the north
  port of $v$ then $e$ is the leftmost edge of the degree~4 top bar
  $b(v)$, and $e$ is the rightmost edge of the top of $b(u)$ and
  cannot use the west port of $u$. In any case, $e$ has at most four
  bends and satisfies the claim; refer to
  Figs.~\ref{fig:outer-1-planeOrtho7}-\ref{fig:outer-1-planeOrtho8}
  for the case where $e$ is drawn with four bends.

  \begin{figure}[tb]
    \centering
    \begin{subfigure}[b]{.13\linewidth}
      \centering
      \includegraphics[page=1, width=\textwidth]{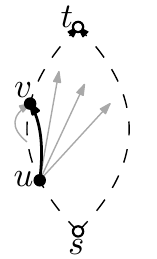}
      \caption{}
      \label{fig:outer-1-planeOrtho1}
    \end{subfigure}
    \hfil
    \begin{subfigure}[b]{.13\linewidth}
      \centering
      \includegraphics[page=2, width=\textwidth]{outer-ortho_new}
      \caption{}
      \label{fig:outer-1-planeOrtho2}
    \end{subfigure}
    \hfil
    \begin{subfigure}[b]{.13\linewidth}
      \centering
      \includegraphics[page=3, width=\textwidth]{outer-ortho_new}
      \caption{}
      \label{fig:outer-1-planeOrtho3}
    \end{subfigure}
    \hfil
    \begin{subfigure}[b]{.13\linewidth}
      \centering
      \includegraphics[page=4, width=\textwidth]{outer-ortho_new}
      \caption{}
      \label{fig:outer-1-planeOrtho4}
    \end{subfigure}
    \hfil
    \begin{subfigure}[b]{.13\linewidth}
      \centering
      \includegraphics[page=5, width=\textwidth]{outer-ortho_new}
      \caption{}
      \label{fig:outer-1-planeOrtho5}
    \end{subfigure}
    \hfil
    \begin{subfigure}[b]{.13\linewidth}
      \centering
      \includegraphics[page=6, width=\textwidth]{outer-ortho_new}
      \caption{}
      \label{fig:outer-1-planeOrtho6}
    \end{subfigure}
    \hfil
    \begin{subfigure}[b]{.13\linewidth}
      \centering
      \includegraphics[page=7, width=\textwidth]{outer-ortho_new}
      \caption{}
      \label{fig:outer-1-planeOrtho7}
    \end{subfigure}
    \hfil
    \begin{subfigure}[b]{.13\linewidth}
      \centering
      \includegraphics[page=8, width=\textwidth]{outer-ortho_new}
      \caption{}
      \label{fig:outer-1-planeOrtho8}
    \end{subfigure}
    \hfil
    \begin{subfigure}[b]{.13\linewidth}
      \centering
      \includegraphics[page=11, width=\textwidth]{outer-ortho_new}
      \caption{}
      \label{fig:outer-1-planeOrtho11}
    \end{subfigure}
    \hfil
    \begin{subfigure}[b]{.13\linewidth}
      \centering
      \includegraphics[page=9, width=\textwidth]{outer-ortho_new}
      \caption{}
      \label{fig:outer-1-planeOrtho9}
    \end{subfigure}
    \hfil
    \begin{subfigure}[b]{.13\linewidth}
      \centering
      \includegraphics[page=10, width=\textwidth]{outer-ortho_new}
      \caption{}
      \label{fig:outer-1-planeOrtho10}
    \end{subfigure}
    \hfil
    \begin{subfigure}[b]{.13\linewidth}
      \centering
      \includegraphics[page=1]{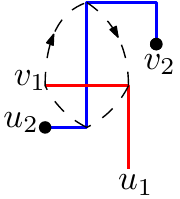}
      \caption{}
      \label{fig:outer-1-planeOrtho9a}
    \end{subfigure}
    \hfil
    \begin{subfigure}[b]{.13\linewidth}
      \centering
      \includegraphics[page=2]{wings}
      \caption{}
      \label{fig:outer-1-planeOrtho10b}
    \end{subfigure}
    \caption{Cases considered in the proof of
      Theorem~\ref{thm:outer1plane-OC3}.}
    \label{fig:outer-1-planeBAndWConfiguration}
  \end{figure}

  As already mentioned the above claim implies that if the vertical
  segment connecting the two bends of the S-shape is not crossed, then
  the S-shape can be eliminated so that the theorem holds.  This is true
  if edge $e=(u,v)$ is a planar edge of $G$ or if it is a crossing red
  edge.
  It remains to consider blue edges.
  Recall that the selection of red and blue edges for the construction
  of the 1-planar bar visibility representation assured that the
  vertically drawn blue edge is always incident to the topmost bar.
  Let $e_1=(u_1,v_1)$ and $e_2=(u_2,v_2)$ be two crossing edges, such
  that $u_1$ appears before all other vertices in the $st$-ordering, and
  $u_2$ appears before $v_2$. We distinguish three cases depending on
  whether we have a diamond configuration, a left wing, or a right wing
  configuration in the 1-planar bar visibility representation.
  \begin{itemize}
  \item $e_1$ and $e_2$ create a diamond configuration. In this case we
    have $u_1,u_2\in S_l\cup{\{s\}}$ and $v_1,v_2\in S_r\cup{\{t\}}$ as
    shown in Fig.~\ref{fig:outer-1-planeOrtho11}. Due to 1-planarity
    only edge $e_2$ can be drawn with three or four bends and this is
    always the red edge of a diamond configuration in the 1-planar bar
    visibility representation.
  \item $e_1$ and $e_2$ create a left wing configuration;
    refer to
    Fig.~\ref{fig:outer-1-planeOrtho9}.
    In this case $e_2$ is the blue edge and
     vertices $u_1,u_2,v_1$ are in $S_l\cup{\{s\}}$.  By
    outer-1-planarity, $b(u_2)$, cannot have a top edge to the right
    of $e_2$ nor an additional edge to $S_r \cup t$.  Thus, $e_2$
    cannot be attached to the west or the south port of $u_2$. Hence,
    if $e_2$ has more than two bends, then $b(v_2)$ must be a degree 4
    top bar, $e_2$ uses the north port of $v_2$ and the east port
    of~$u_2$.  
    Figure~\ref{fig:outer-1-planeOrtho9a} shows the drawing of the two
    crossing edges.  Now consider the red edge $e_1$. By
    outer-1-planarity,
    $b(u_1)$ cannot have a top edge to the right of $e_1$ nor an
    additional edge to $S_r \cup t$. Thus, $e_1$ cannot be attached to
    the west or the south port of $u_1$. Similarly, $e_1$ is not attached to the north or west port of $v_1$ and the construction bend of $e_1$ was not removed due to an S-shaped pair of
    bends. We apply the flow technique around the crossing point of $e_1$ and $e_2$ as indicated in Fig.~\ref{fig:outer-1-planeOrtho9a}: two bends of $e_2$ are removed and the construction bend of $e_1$ is moved to the other side of the crossing.
    
    \item The case where $e_1$ and $e_2$ create a right wing is symmetric to the previous case and indicated in Figs.~\ref{fig:outer-1-planeOrtho10} and~\ref{fig:outer-1-planeOrtho10b}. 
  \end{itemize}
  We showed that whenever an edge has three or four bends, then two
  bends create an S-shape and are connected with a vertical
  segment. The two bends can be removed from the drawing giving an
  orthogonal 1-plane drawing with curve complexity three, as the theorem
  states.
\end{proof}

\section{Additional Material for
  Section~\ref{sec:smooth.outer}}
\label{apx:thm9}

In order to achieve curve complexity two for smooth orthogonal
drawings of biconnected outer-1-plane graphs, we modify the
algorithm of Alam et al.~\cite{ABKKKW14} for outerplane graphs. Hence,
in the following, we show how their \SC{1}-layout algorithm for
outerplane graphs deals with biconnected outerplane graphs; for
details we refer to the original paper~\cite{ABKKKW14}.

In order to define an ordering of the faces, the algorithm of Alam et
al. uses the weak dual tree $T$ of $G$ which is rooted at a leaf
face. One edge of the root face incident to a degree two vertex and a
vertex of degree at most three\footnote{If the graph does not have a
  leaf face with an edge of this property, removing a degree~$2$
  vertex will produce a new leaf face with the required property.} is
selected as the first edge which will be drawn as a vertical
segment. Following $T$ the graph is drawn face by face. Note that for
each face that we draw, we have previously already drawn a single edge
$(u,v)$ which will serve as a reference to select a suitable case from
Fig.~\ref{fig:alamEtAlOuterSmooth}. Observe that in the cases shown in
Figs.~\ref{fig:alamEtAlOuterSmooth1a},
\ref{fig:alamEtAlOuterSmooth2a}, \ref{fig:alamEtAlOuterSmooth1b}
and~\ref{fig:alamEtAlOuterSmooth2b} a \emph{side-arc} is introduced,
that is, a convex quarter circle which will not serve as a new
reference edge since it is incident to $u$ or $v$ which has remaining
degree zero by construction.
\begin{figure}[tb]
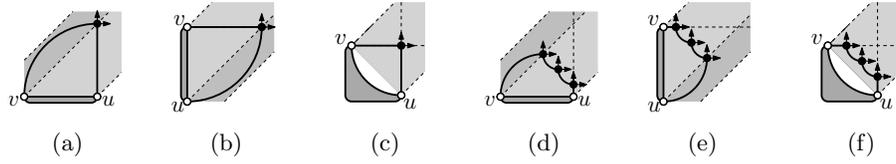

  \centering
  \begin{subfigure}[b]{.135\linewidth}
    \centering
    \includegraphics[page=1, width=\textwidth]{alam-outer-smooth}
    \caption{}
    \label{fig:alamEtAlOuterSmooth1a}
  \end{subfigure}
  \hfill
  \begin{subfigure}[b]{.135\linewidth}
    \centering
    \includegraphics[page=3, width=\textwidth]{alam-outer-smooth}
    \caption{}
    \label{fig:alamEtAlOuterSmooth2a}
  \end{subfigure}
  \hfill
  \begin{subfigure}[b]{.135\linewidth}
    \centering
    \includegraphics[page=5, width=\textwidth]{alam-outer-smooth}
    \caption{}
    \label{fig:alamEtAlOuterSmooth3a}
  \end{subfigure}
  \hfill
  \begin{subfigure}[b]{.135\linewidth}
    \centering
    \includegraphics[page=2, width=\textwidth]{alam-outer-smooth}
    \caption{}
    \label{fig:alamEtAlOuterSmooth1b}
  \end{subfigure}
  \hfill
  \begin{subfigure}[b]{.135\linewidth}
    \centering
    \includegraphics[page=4, width=\textwidth]{alam-outer-smooth}
    \caption{}
    \label{fig:alamEtAlOuterSmooth2b}
  \end{subfigure}
  \hfill
  \begin{subfigure}[b]{.135\linewidth}
    \centering
    \includegraphics[page=6, width=\textwidth]{alam-outer-smooth}
    \caption{}
    \label{fig:alamEtAlOuterSmooth3b}
  \end{subfigure}
  \caption{Algorithm of Alam et al.~\cite{ABKKKW14} for \SC{1}-drawing
    outerplane graphs w.r.t.\ reference edge $(u,v)$: (a)--(c)
    inserting singletons and (d)--(f) inserting chains.}
  \label{fig:alamEtAlOuterSmooth}
\end{figure}

Planarity is proven as follows: When inserting a face above reference
edge $(u,v)$, additionally a semi-strip $L_{u,v}$ bounded by rays of
slope $+1$ emerging from $u$ and $v$ is introduced (lightgray in
Fig.~\ref{fig:alamEtAlOuterSmooth}). Note that the two rays are not
part of the lightgray semi-strip. If the face contains a \emph{side-arc} an
additional semi-strip touching the lightgray one is introduced
(dark-gray in Fig.~\ref{fig:alamEtAlOuterSmooth}). This semi-strip has
half the width of the lightgray one. The entire subgraph that can be
separated by removing $u$ and $v$ will be located inside $L_{u,v}$ and, potentially,
its two surrounding dark-gray semi-strips $L_{u,v}^t$ and
$L_{u,v}^b$. In particular, let $(u^\prime,v^\prime)$ be the reference edge for
the parent face in the weak dual $T$ that was used for placing $u$ and
$v$. Then, all semi-strips defined by $u$ and $v$ will be contained in
$L_{u',v'}$ and $L_{u',v'}^t$ (if it exists) and $L_{u',v'}^b$ (if it
exists). Note that due to the number of ports available, if the
subgraph incident to $u$ and $v$ uses $L_{u,v}^t$ or $L_{u,v}^b$, the
corresponding edge neighboring $(u,v)$ cannot introduce another
subgraph, hence no additional semi-strip is defined. Therefore, semi-strips can
overlap only if they are defined by two faces, $f_1$ and $f_2$, such that $f_1$
is ancestor of $f_2$ in $T$, which prevents intersections.

Consider now a biconnected outer-1-planar graph $G$ and the planar graph
$G_\mathrm{p}$ derived from $G$ by replacing all crossings with dummy
vertices. In $G_\mathrm{p}$, there exist three types of dummy vertices: a
\emph{dummy-cut}, which is a cut vertex of $G_\mathrm{p}$ and its four edges
belong to the outer face of $G_\mathrm{p}$, an \emph{out-dummy}, which is not a
cut vertex but is located on the outer face of $G_\mathrm{p}$ (has exactly two
consecutive edges on the outer face), and an \emph{in-dummy}, which is
not on the outer face of $G_\mathrm{p}$.

\smoothOuter*

\begin{proof}
  We use a modified version of the algorithm of Alam et
  al.~\cite{ABKKKW14} which produces \SC{1}-layouts for outerplanar
  graphs without crossings.  The algorithm requires an ordering of the
  vertices which is computed by the weak dual of a biconnected graph
  $G'$ and an appropriate starting edge as reference for the first
  face. We first introduce a suitable order of the vertices which
  assumes that the first edge is selected appropriately.

  Let $G$ be a biconnected outer-1-plane graph, and let~$G_\mathrm{p}$ the
  planarized graph of $G$. We define a biconnected graph $G'$ as
  follows. We keep all in-dummy and out-dummy vertices of $G_\mathrm{p}$ in
  $V(G')$. For a dummy-cut $x$ created by a pair of crossing edges
  $(u,v)$ and $(u',v')$, we create a 4-wheel around $x$ by adding
  \emph{virtual edges} as shown in Fig.~\ref{fig:virtualEdgesDummyCut}
  and remove $x$. Note that for each dummy-cut, we add at least two
  virtual edges that are on the outer face of $G'$.  Now $G'$ is
  planar (not necessarily outerplanar), and contains only in-dummy
  vertices in its interior. If a face was created by a dummy-cut, we
  say it is a \emph{cut-face}, otherwise it is a \emph{normal} face.
  Let $e$ be the starting reference edge on the outer face of $G'$
  that is also incident to another face $f_0$. The weak dual of $G'$
  may contain cycles that are created by in-dummy vertices. However
  any cycle has length four and any two cycles are edge disjoint due to
  1-planarity.  We order the faces of $G'$ by applying leftmost BFS on
  its weak dual and starting from $f_0$. Cycles of length four have two
  directed paths of the same length, say $f_1,f_2,f_4$ and
  $f_1,f_3,f_4$ where $f_2$ and $f_3$ are processed consecutively; see
  Fig.~\ref{fig:virtualEdgesInDummy}. We say that faces $f_2$ and
  $f_3$ create a \emph{facial pair}. Note that faces $f_1$, $f_2$ and
  $f_3$ are normal faces. In order to process a face, we require that
  it has a reference edge. The only case where this edge may not be
  defined, is for a face $f_4$ of a cycle that is processed after
  facial-pair $f_2$ and $f_3$. In this case we add a virtual edge as
  shown in Fig.~\ref{fig:virtualEdgesInDummy}.

  Consider a walk around the outer face of $G'$. An edge $e$ might be
  a planar edge of $G$, or a half-edge (incident to an out-dummy) or
  it is an edge that does not belong to $G$ and was added for caging a
  dummy-cut. We claim that there exists at least one edge $e=(s,s')$
  that is either planar or a half-edge. Indeed, consider the
  planarized graph $G_\mathrm{p}$ derived from $G$ and a leaf-component $C$ of
  its BC-tree decomposition. If $C$ is the root-component then it
  clearly consists only of planar and half-edges. Otherwise, $C$
  contains exactly one dummy-cut with degree two and at least two more
  vertices (the endpoints of the crossing edges of the
  dummy-cut). Since there are no other dummy-cuts and $C$ is
  biconnected, there exists a path between the two vertices that
  contains only planar edges and half-edges as claimed.

  Let $e=(s,s')$ be a planar or half-edge of $G'$. We subdivide $e$
  twice adding vertices $s_1$ and $s_2$. Then our reference edge for
  $G'$ will be edge $(s_1,s_2)$ where both $s_1$ and $s_2$ have degree
  2. We draw $(s_1,s_2)$ as a vertical segment and continue with the
  first face. Note that $s$ and $s'$ are diagonally aligned and edge
  (or half-edge) $(s,s')$ uses the west port of $s$ and south port of
  $s'$. We replace the three segments used for $(s,s')$ with a
  3-quarter arc, i.e. with curve complexity one as shown in
  Fig.~\ref{fig:smooth_first_edge}.

  \begin{figure}[tb]
    \begin{minipage}[b]{.6\textwidth}
      \centering
      \begin{subfigure}[b]{.34\linewidth}
        \centering
        \includegraphics[page=1, width=\textwidth]{outer-smooth}
        \caption{}
        \label{fig:virtualEdgesDummyCut}
      \end{subfigure}
      \hfil
      \begin{subfigure}[b]{.34\linewidth}
        \centering
        \includegraphics[page=2, width=\textwidth]{outer-smooth}
        \caption{}
        \label{fig:virtualEdgesInDummy}
      \end{subfigure}
      \caption{Introduction of virtual edges for dummy-cuts and
        in-dummies.  }
      \label{fig:virtualEdges}
    \end{minipage}
    \hfill
    \begin{minipage}[b]{.37\textwidth}
      \centering
      \includegraphics[page=3]{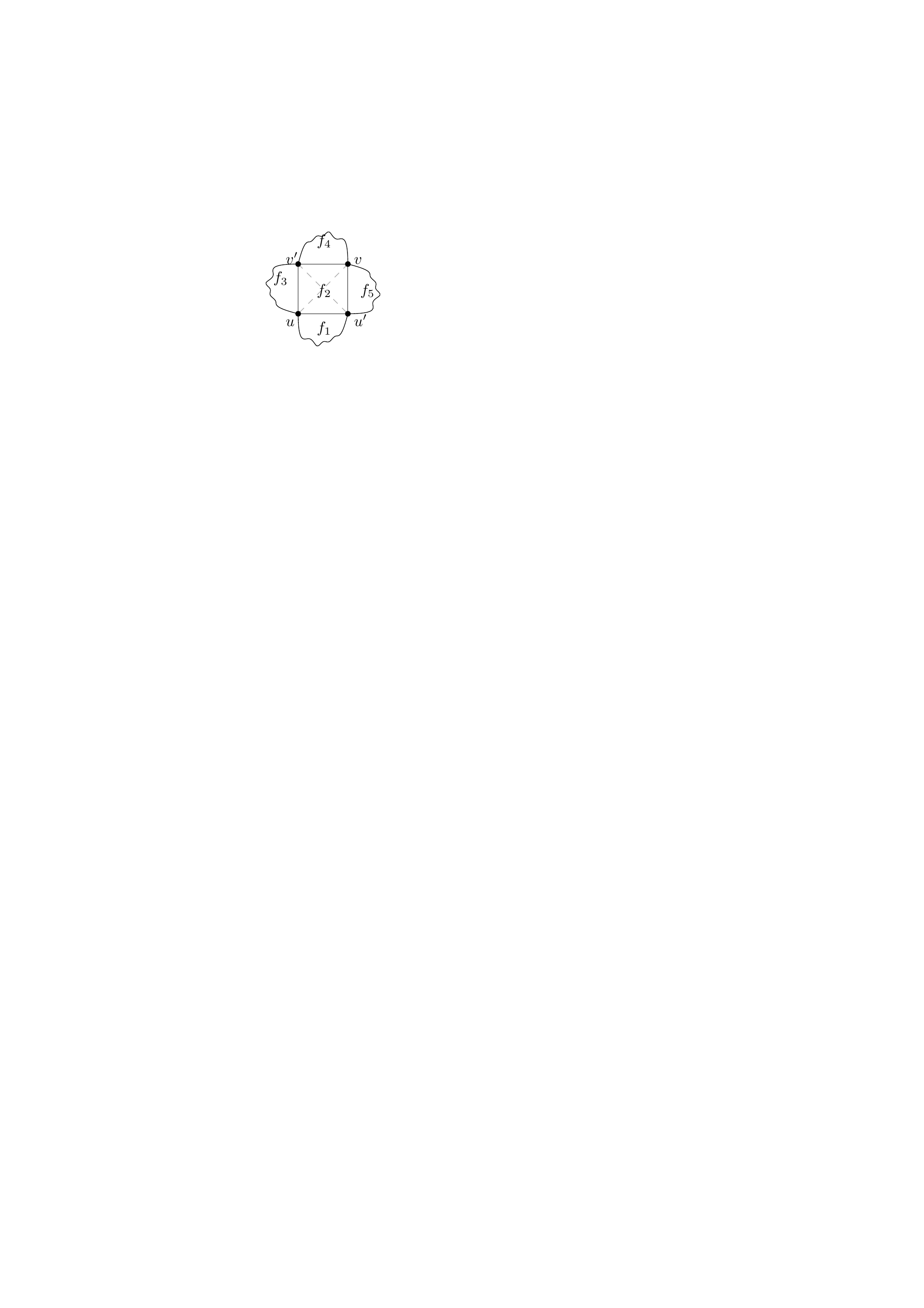}
      \caption{Starting the algorithm. ~~}
      \label{fig:smooth_first_edge}
    \end{minipage}
  \end{figure}

  In the following we describe how to draw $G$ based on the order
  defined on the faces of $G'$. We proceed by adding one face of $G'$
  at a time. We make sure that virtual edges are not present in the
  final drawing and that when cut-faces are processed we draw the two
  crossing edges instead.  Note that we deviate from the initial
  algorithm only if a face of $G'$ contains either in-dummy vertices
  or is a cut-face. In order to process in-dummies, we may also use
  convex quarter-arcs as we shall shortly see. At each step, we use
  one curve per edge, except for three special
  cases where we use edges of complexity two:
  \begin{inparaenum}[(i)]
  \item the case of crossing edges that are drawn when processing a
    cut-face, 
  \item planar edges of the outer face (which may be virtual) that
    cage in-dummy vertices, and
  \item planar edges of a triangular face (which may be virtual) with its
    reference edge drawn as a convex quarter arc.
  \end{inparaenum}
  
	Let $f$ be the next face to process. We distinguish three cases
  depending on whether $f$ is a normal face and does not belong to a
  facial-pair, is a normal face and belongs to a facial-pair, or is a
  cut-face.  During the process we respect the following invariants:
  \begin{enumerate}[\text{I}.1]
  \item \label{P.1} If an edge is incident to an out-dummy, it is
    always drawn with curve complexity~one, except for the case where
    it is a side-arc not incident to the north or east port of the
    out-dummy, where it may be drawn with curve complexity~two.
  \item \label{P.2} If a virtual edge is a reference for a normal
    face, it is always drawn as a quarter-arc, either convex or
    concave, and its endpoints are diagonally aligned.
  \item \label{P.3} If the reference edge for a cut-face is
    non-virtual, then it is drawn with curve complexity one as a convex
    or concave quarter-arc, or as horizontal or vertical segment.
  \item \label{P.4} If a virtual side-arc is a reference for a
    cut-face, it is always drawn with curve complexity~two.
  \end{enumerate}

  The last invariant, namely I.\ref{P.4}, implies that we need to
  alter the traditional drawing of Alam et al~\cite{ABKKKW14} in the
  cases where side-arcs are used. The corresponding drawings are given
  in Fig.~\ref{fig:alamEtAlOuterSmoothSide}.

  \begin{figure}[htb]
    \centering
    \begin{subfigure}[b]{.2\linewidth}
      \centering
      \includegraphics[page=1,width=\textwidth]{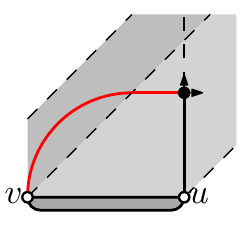}
      \caption{}
      \label{fig:alamEtAlOuterSmoothSide1a}
    \end{subfigure}
    \qquad
    \begin{subfigure}[b]{.2\linewidth}
      \centering
      \includegraphics[page=3, width=\textwidth]{alam-outer-smooth-side}
      \caption{}
      \label{fig:alamEtAlOuterSmoothSide2a}
    \end{subfigure}
    \qquad
    \begin{subfigure}[b]{.2\linewidth}
      \centering
      \includegraphics[page=2, width=\textwidth]{alam-outer-smooth-side}
      \caption{}
      \label{fig:alamEtAlOuterSmoothSide1b}
    \end{subfigure}
    \qquad
    \begin{subfigure}[b]{.2\linewidth}
      \centering
      \includegraphics[page=4, width=\textwidth]{alam-outer-smooth-side}
      \caption{}
      \label{fig:alamEtAlOuterSmoothSide2b}
    \end{subfigure}
    \caption{Using side-arcs with curve complexity two when (a)--(b)
      drawing singleton faces and (c)--(d) drawing chains. Red edges
      indicate virtual reference edges.}
    \label{fig:alamEtAlOuterSmoothSide}
  \end{figure}

  In addition to the invariants, we ensure that the following
  property holds throughout the process:

  \begin{enumerate}[P.1]
  \item \label{P.2a} For an edge that is drawn as a convex quarter-arc,
    it holds that its two endpoints are not dummy vertices and have
    remaining degree at most one.
  \end{enumerate}

  \paragraph{$f$ is a normal face and does not belong to a facial pair.}

  Suppose first that we encounter a normal face $f$ where the reference
  edge is a virtual edge $(u,v)$. By Invariant~I.\ref{P.2} $(u,v)$ is
  drawn as a quarter-arc. We draw $f$ by replacing the virtual edge as
  demonstrated in Fig.~\ref{fig:normal1} if $(u,v)$ is concave or as in
  Fig.~\ref{fig:normal2} if it is convex. Note that all edges are drawn
  as quarter-arcs, horizontal or vertical segments.  On the other hand,
  if the reference edge $(u,v)$ is not virtual, then the only case we
  need to pay extra attention is when $(u,v)$ is drawn as a convex
  quarter arc. If $f$ is a singleton we use the configuration of
  Fig.~\ref{fig:normal3}, where we draw the two edges with curve
  complexity two. If $f$ has more vertices, we use the placement of
  Figs.~\ref{fig:normal4}--\ref{fig:normal6}, depending on whether the
  two side-arcs are virtual reference edges for cut-faces or not (hence,
  we preserve Invariants~I.\ref{P.3}
  and~I.\ref{P.4}). Property~P.\ref{P.2a} and Invariant~I.\ref{P.2} are
  satisfied since we do not use convex quarter-arcs and there are no
  virtual reference edges for normal faces. The only case that could
  violate Invariant~I.\ref{P.1} is the case of adding a singleton face
  as in Fig.~\ref{fig:normal3}. However, the two vertices incident to
  the reference edge are not dummy vertices (by
  Property~P.\ref{P.2a}). Also, after inserting the singleton face they
  have remaining degree zero, therefore, the third vertex cannot be an
  out-dummy either.

  \begin{figure}[tb]
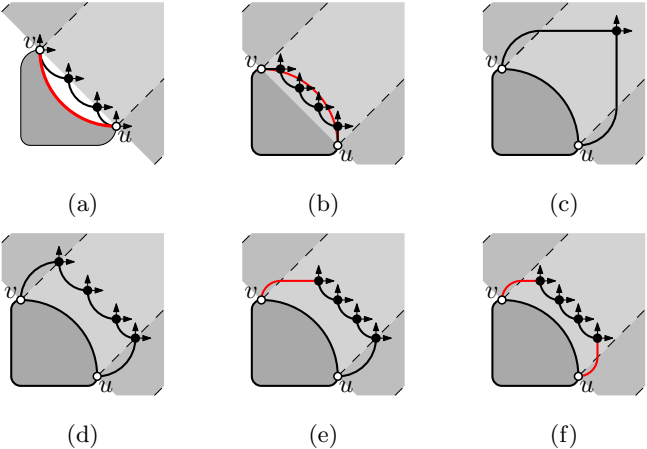

    \centering
    \begin{subfigure}[b]{.20\linewidth}
      \centering
      \includegraphics[page=1, width=\textwidth]{smooth_normal}
      \caption{}
      \label{fig:normal1}
    \end{subfigure}
    \qquad
    \begin{subfigure}[b]{.20\linewidth}
      \centering
      \includegraphics[page=2, width=\textwidth]{smooth_normal}
      \caption{}
      \label{fig:normal2}
    \end{subfigure}
    \qquad
    \begin{subfigure}[b]{.20\linewidth}
      \centering
      \includegraphics[page=3, width=\textwidth]{smooth_normal}
      \caption{}
      \label{fig:normal3}
    \end{subfigure}

    \begin{subfigure}[b]{.20\linewidth}
      \centering
      \includegraphics[page=4, width=\textwidth]{smooth_normal}
      \caption{}
      \label{fig:normal4}
    \end{subfigure}
    \qquad
    \begin{subfigure}[b]{.20\linewidth}
      \centering
      \includegraphics[page=5, width=\textwidth]{smooth_normal}
      \caption{}
      \label{fig:normal5}
    \end{subfigure}
    \qquad
    \begin{subfigure}[b]{.20\linewidth}
      \centering
      \includegraphics[page=6, width=\textwidth]{smooth_normal}
      \caption{}
      \label{fig:normal6}
    \end{subfigure}
    \caption{Processing a normal face whose reference edge is
      (a)--(b)~virtual, and (c)--(f)~non virtual.}
    \label{fig:normal}
  \end{figure}

  \paragraph{$f$ is a normal face and belongs to a facial pair.}
  This case appears only when there is a cycle of four faces, say
  $f_1,f_2,f_3,f_4$, with $x$ as their common in-dummy vertex. Face
  $f_1$ is a normal face and has already been drawn and $f$ is either
  $f_2$ or $f_3$. Since faces $f_2$ and $f_3$ are consecutive, we show
  how to draw them simultaneously, so that an appropriate reference edge
  can be defined for $f_4$ that will be drawn at a later step.  Observe
  that $f_2$ and $f_3$ are normal faces and can be drawn as explained in
  the previous case.  Let $(u,v)$ and $(u',v')$ be the two crossing
  edges of $G$, such that $f_1$ contains vertices $u$, $u'$ and $x$,
  $f_2$ contains $x$, $u^\prime$ and $v$, and $f_3$ contains $x$, $u$
  and $v'$.

  Also note that when $f_1$ is drawn, vertex $x$ has north and east
  ports available, so that edges $(x,v)$ and $(x,v')$ are drawn as
  vertical and horizontal segments for $f_2$ and $f_3$ respectively.
  For edge $(v,v^\prime)$ to become a reference edge for face $f_4$, we
  want vertices $v,v^\prime$ to be diagonally aligned (to preserve
  Invariant~I.\ref{P.2}).  In order to do this we consider the following
  cases depending on the placement of vertices $u$, $x$, and $u'$, as
  produced by the drawing of face $f_1$. Note that edges $(u,x)$ and
  $(x,u')$ are always drawn with curve complexity one. Furthermore, they
  can be drawn neither as side-arcs of face $f_1$ (since they do not
  belong to the outer face), nor as convex quarter-arcs by
  Property~P.\ref{P.2a}.

  \begin{itemize}
  \item Edges $(u,x)$ and $(x,u')$ are drawn as consecutive concave
    quarter arcs (see Fig.~\ref{fig:facial_pair1}). Those arcs have the
    same size. Note that we are able to modify the sizes of the two arcs
    by moving $x$ along the diagonal $(u,u')$ and without affecting the
    properties of the drawing. When moving $x$ from $u$ to $u'$, the
    vertical segment used for $(x,v)$ increases and the horizontal
    segment for $(x,v')$ decreases. Hence, we can find a placement for
    $x$ so that the two segments have same length; see
    Fig.~\ref{fig:facial_pair1a}.
  \item Edge $(u',x)$ is drawn as horizontal segment and $(x,u)$ as a
    concave quarter arc (see Figs.~\ref{fig:facial_pair2} and
    ~\ref{fig:facial_pair4}). In the case where the length of $(x,v)$ is
    smaller than the length of $(x,v')$ (refer to
    Fig.~\ref{fig:facial_pair2}), we move $x$ towards $v'$ and redraw
    $(x,u)$ with curve complexity two as shown in
    Fig.~\ref{fig:facial_pair3}. The side-arc of $f_2$ is also redrawn
    with curve complexity two. In the other case, if the length of
    $(x,v)$ is greater than the length of $(x,v')$ as in
    Fig.~\ref{fig:facial_pair4}, we redraw $f_2$ by reducing the size of all new edges including  $(x,v)$, and redrawing the side-arc of $f_2$ with curve complexity
    two. The corresponding drawing is shown in
    Fig.~\ref{fig:facial_pair5}.
  \item The case of edge $(u',x)$ drawn as concave quarter arc and
    $(x,u)$ as a vertical segment is symmetric to the previous one.
  \item Edge $(u',x)$ is drawn as horizontal segment and $(x,u)$ as
    vertical segment (see Fig.~\ref{fig:facial_pair6}). We shorten the
    longer segment and redraw the side-edge of its face with edge
    complexity two as shown in Fig.~\ref{fig:facial_pair7}.
  \end{itemize}

  Note that for the in-dummy vertex each incident half-edge is drawn
  with curve complexity~one, except for the case shown in
  Fig.~\ref{fig:facial_pair3}. In this special case, however, the
  corresponding crossing edges of $G$ are still of curve complexity~two.

  \begin{figure}[tb]
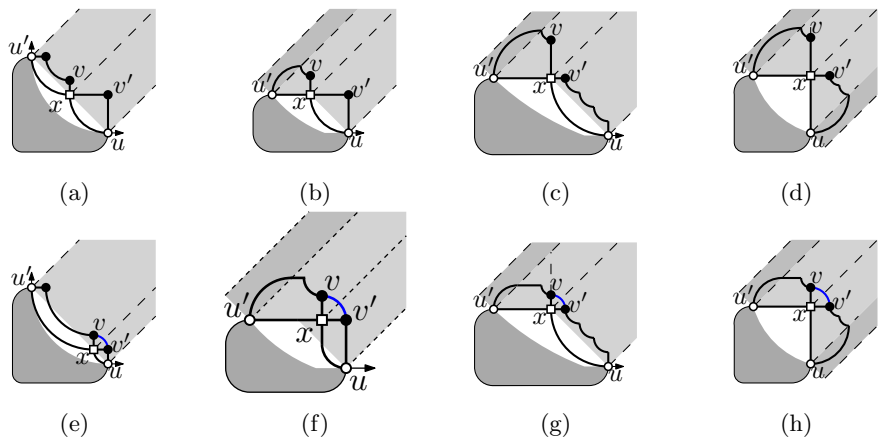

    \centering

    \begin{subfigure}[b]{.2\linewidth}
      \centering
      \includegraphics[page=1, width=\textwidth]{facial_pair}
      \caption{}
      \label{fig:facial_pair1}
    \end{subfigure}
    \qquad
    \begin{subfigure}[b]{.2\linewidth}
      \centering
      \includegraphics[page=3, width=\textwidth]{facial_pair}
      \caption{}
      \label{fig:facial_pair2}
    \end{subfigure}
    \qquad
    \begin{subfigure}[b]{.2\linewidth}
      \centering
      \includegraphics[page=5, width=\textwidth]{facial_pair}
      \caption{}
      \label{fig:facial_pair4}
    \end{subfigure}
    \qquad
    \begin{subfigure}[b]{.2\linewidth}
      \centering
      \includegraphics[page=7, width=\textwidth]{facial_pair}
      \caption{}
      \label{fig:facial_pair6}
    \end{subfigure}
    \qquad
    \begin{subfigure}[b]{.2\linewidth}
      \centering
      \includegraphics[page=2, width=\textwidth]{facial_pair}
      \caption{}
      \label{fig:facial_pair1a}
    \end{subfigure}
    \qquad
    \begin{subfigure}[b]{.2\linewidth}
      \centering
      \includegraphics[page=4, width=\textwidth]{facial_pair}
      \caption{}
      \label{fig:facial_pair3}
    \end{subfigure}
    \qquad
    \begin{subfigure}[b]{.2\linewidth}
      \centering
      \includegraphics[page=6, width=\textwidth]{facial_pair}
      \caption{}
      \label{fig:facial_pair5}
    \end{subfigure}
    \qquad
    \begin{subfigure}[b]{.2\linewidth}
      \centering
      \includegraphics[page=8, width=\textwidth]{facial_pair}
      \caption{}
      \label{fig:facial_pair7}
    \end{subfigure}
    \caption{Drawing faces that belong to a facial pair. Blue edges
      may be virtual or non-virtual.}
    \label{fig:facial_pair}
  \end{figure}

  After aligning vertices $v$ and $v^\prime$ we draw reference edge
  $(v,v^\prime)$ as a convex quarter arc; see
  Figs.~\ref{fig:facial_pair1a}--\ref{fig:facial_pair7}. It might be
  the case that this is a (virtual) reference edge, but then it
  satisfies Invariants~I.\ref{P.2} and~I.\ref{P.3}. Since $v$ and
  $v^\prime$ are incident to dummy $x$, Property~P.\ref{P.2a} is
  preserved. We can argue that the remaining invariants are also
  satisfied in a similar way as we did in the case where $f$ did not
  belong to a facial pair. Invariant~I.\ref{P.1} is maintained by the initial drawing of $f_2$ and $f_3$ as normal faces using the additional configurations of Fig.~\ref{fig:alamEtAlOuterSmoothSide}.
	

  \paragraph{$f$ is a cut-face.}

  For any cut-face of $G'$ there exist two virtual edges (see $(u,v')$
  and $(u',v)$ in Fig.~\ref{fig:virtualEdgesDummyCut})
  on the outer face of $G'$ used solely for preserving
  biconnectivity. We ignore these two edges when processing $f$ and draw
  the pair of crossing edges $(u,v)$ and $(u',v')$ instead. We
  distinguish two cases, depending on whether edge $(u,u')$ exists in
  $G$ or is a virtual edge. In the first case, we use one of the
  configurations shown in Fig.~\ref{fig:smooth_cut}, depending on how
  $(u,u')$ is drawn. Note, that $(u,u')$ is not on the outer face of
  $G'$, and therefore is not a side-arc of its face. For the other case,
  where $(u,u')$ is a virtual edge, we use the configurations of
  Fig.~\ref{fig:smooth_cut_virtual} for all possible drawings of
  $(u,v)$. Observe, that if virtual edge $(u,u')$ is a side-arc, it is
  drawn with curve complexity~two by Invariant~I.\ref{P.4}.  Since apart
  from crossing edges $(u,v)$ and $(u',v')$, we only add new reference
  edge $(v,v')$ which is drawn as a concave quarter circle, we preserve
  Invariants~I.\ref{P.1}--I.\ref{P.4}.

  \begin{figure}[tb]
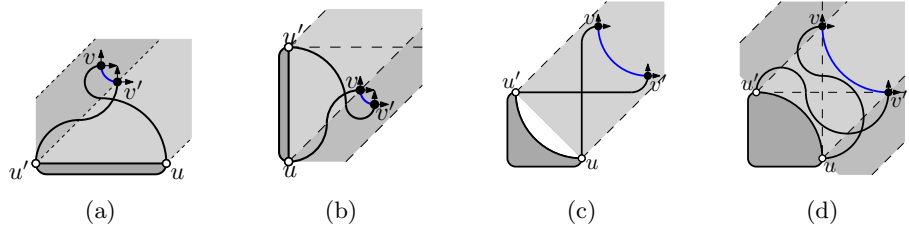

    \centering
    \begin{subfigure}[b]{.2\linewidth}
      \centering
      \includegraphics[page=1, width=\textwidth]{smooth_cut}
      \caption{}
      \label{fig:smooth_cut1}
    \end{subfigure}
    \qquad
    \begin{subfigure}[b]{.2\linewidth}
      \centering
      \includegraphics[page=2, width=\textwidth]{smooth_cut}
      \caption{}
      \label{fig:smooth_cut2}
    \end{subfigure}
    \qquad
    \begin{subfigure}[b]{.2\linewidth}
      \centering
      \includegraphics[page=3, width=\textwidth]{smooth_cut}
      \caption{}
      \label{fig:smooth_cut3}
    \end{subfigure}
    \qquad
    \begin{subfigure}[b]{.2\linewidth}
      \centering
      \includegraphics[page=4, width=\textwidth]{smooth_cut}
      \caption{}
      \label{fig:smooth_cut4}
    \end{subfigure}

    \caption{Drawing a cut-face where the reference edge $(u,v)$ exists
      and is drawn as (a)~an horizontal segment, (b)~a vertical segment,
      (c)~a concave quarter-arc, or (d)~a convex quarter-arc.}
    \label{fig:smooth_cut}
  \end{figure}

  \begin{figure}[tb]
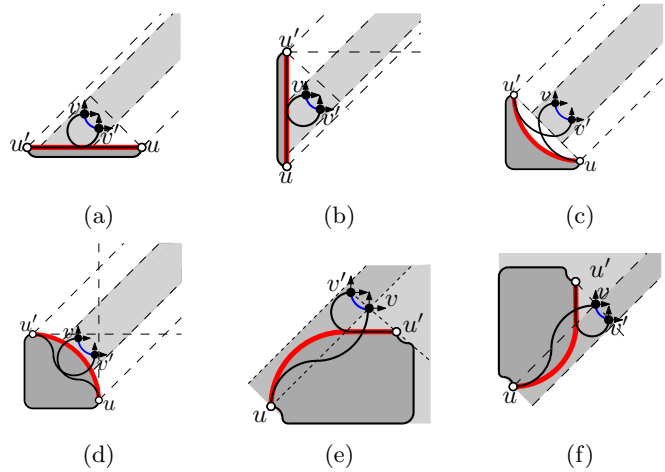

    \centering
    \begin{subfigure}[b]{.2\linewidth}
      \centering
      \includegraphics[page=1,
      width=\textwidth]{smooth_cut_virtual}
      \caption{}
      \label{fig:smooth_cut_virtual1}
    \end{subfigure}
    \qquad
    \begin{subfigure}[b]{.2\linewidth}
      \centering
      \includegraphics[page=2,
      width=\textwidth]{smooth_cut_virtual}
      \caption{}
      \label{fig:smooth_cut_virtual2}
    \end{subfigure}
    \qquad
    \begin{subfigure}[b]{.2\linewidth}
      \centering
      \includegraphics[page=3,
      width=\textwidth]{smooth_cut_virtual}
      \caption{}
      \label{fig:smooth_cut_virtual3}
    \end{subfigure}

    \begin{subfigure}[b]{.2\linewidth}
      \centering
      \includegraphics[page=4,
      width=\textwidth]{smooth_cut_virtual}
      \caption{}
      \label{fig:smooth_cut_virtual4}
    \end{subfigure}
    \qquad
    \begin{subfigure}[b]{.2\linewidth}
      \centering
      \includegraphics[page=5,
      width=\textwidth]{smooth_cut_virtual}
      \caption{}
      \label{fig:smooth_cut_virtual5}
    \end{subfigure}
    \qquad
    \begin{subfigure}[b]{.2\linewidth}
      \centering
      \includegraphics[page=6,
      width=\textwidth]{smooth_cut_virtual}
      \caption{}
      \label{fig:smooth_cut_virtual6}
    \end{subfigure}

    \caption{Drawing a cut-face where the reference edge $(u,v)$
      is virtual and is drawn as (a)~an horizontal segment, (b)~a
      vertical segment, (c)~a concave quarter-arc, (d)~a convex
      quarter-arc.  (e)--(f)~a side edge.}
    \label{fig:smooth_cut_virtual}
  \end{figure}

  \paragraph{Correctness.}

  The produced layout of $G'$ is a smooth layout for our initial graph
  $G$, if we consider in-dummy and out-dummy vertices as crossing
  points. We claim that each edge has curve complexity at most two. This
  is true for all planar edges and crossing edges which create
  cut-faces.  Also, crossing edges corresponding to in-dummies have edge
  complexity two as already claimed. The only case that remains to
  consider are crossing edges corresponding to out-dummies. Recall that
  by Invariant~I.\ref{P.1} a half-edge incident to an out-dummy is drawn
  with curve complexity~two only if it is a side-arc incident to either
  south or west port of the out-dummy. This case only occurs when
  processing facial pairs (refer to Fig.~\ref{fig:facial_pair}). Assume
  that half-edge $e$ is incident to the west port of out-dummy $x$. Then
  $e$ has a horizontal segment incident to $x$. To achieve edge
  complexity~two for the crossing edge of $G$, we have to argue that the
  half-edge $e'$ incident to $x$'s east port is drawn as a horizontal
  segment. By Invariant~I.\ref{P.1}, $e'$ is drawn with edge
  complexity~one, and by Property~P.\ref{P.2a}, $e'$ is not a convex
  quarter-arc. So, assume that $e'$ was drawn as a side-arc. This
  implies that both $e$ and $e'$ are on the outer face of $G'$. A clear
  contradiction since an out-dummy has exactly two consecutive edges on
  the outer face. We conclude that $e'$ has to be drawn as a horizontal
  segment. The case where $e$ is incident to the south-port is treated
  similarly.

  To ensure that our construction preserves 1-planarity, we use a
  similar argument as in~\cite{ABKKKW14}. In particular, for every reference
  edge $(u,v)$, we define a diagonal semi-strip $L_{u,v}$ delimited by
  lines of slope $+1$ through $u$ and $v$; highlighted in light-gray in
  Figs.~\ref{fig:alamEtAlOuterSmoothSide}--\ref{fig:smooth_cut_virtual}.
  Note that the two diagonal lines are not part of $L_{u,v}$. In
  addition, we may extend $L_{u,v}$ by two semi-strips $L_{u,v}^t$ and
  $L_{u,v}^b$ of $0.42$ times\footnote{More precisely: The widths of
    $L_{u,v}^t$ and $L_{u,v}^b$ must be at least $(\sqrt{2}-1)$ times
    the width of $L_{u,v}$ to fully contain a side-arc.} the width of
  $L_{u,v}$ surrounding $L_{u,v}$ from above and below, respectively;
  see dark-gray semi-strips in
  Figs.~\ref{fig:alamEtAlOuterSmoothSide}--\ref{fig:smooth_cut_virtual}.
  However, we only extend $L_{u,v}$ if it is ensured that neighboring
  light-gray semi-strips are empty by the degree restriction (in
  particular, this is the case when the construction uses the north port
  of $v$ or east port of $u$, respectively). The drawing for the
  subgraph emerging from reference edge $(u,v)$ then is completely
  contained in $L_{u,v}$, $L_{u,v}^t$ (if it is introduced) and
  $L_{u,v}^b$ (if it is introduced). Observe that if a new reference
  edge $(u',v')$ is created in $L_{u,v}^t$ or $L_{u,v}^b$, we can choose
  a suitably small length of $(u',v')$ such that the entire subgraph
  emerging from $(u',v')$ remains inside $L_{u,v}^t$ or
  $L_{u,v}^b$. Also note that if we draw a cut-face with virtual reference edge $(u,u')$ using one of the configurations of 
  Figs.~\ref{fig:smooth_cut_virtual1}--\ref{fig:smooth_cut_virtual4}, we
  restrict ourselves to a smaller semi-strip of width $0.16$ times the normal
  width of $L_{u,u'}$ for drawing the component with reference edge
  $(v,v')$ (not up to scale in the figures) so that dark gray semi-strips
  defined for $u$ or $u^\prime$ do not overlap $L_{u,u'}$.
\end{proof}}{}

\end{document}